\def\pgf@circ@spst@path#1{\pgf@circ@bipole@path{spst}{#1}}
\tikzset{switch/.style = {\circuitikzbasekey, /tikz/to path=\pgf@circ@spst@path, l=#1}}
\tikzset{spst/.style = {switch = #1}}
\let\proof\@undefined                        
\let\endproof\@undefined                  
\algnewcommand{\algorithmicgoto}{\textbf{go to}}%
\algnewcommand{\Goto}[1]{\algorithmicgoto~\ref{#1}}%
\algnewcommand{\LineComment}[1]{\Statex \(\triangleright\) #1}
\algnewcommand{\LineCommentN}[1]{\Statex \hspace{1cm}\(\triangleright\) #1}
\newtheorem{prop}{Proposition} 
\newtheorem{cor}{Corollary}
\newtheorem{thm}{Theorem}
	\newtheorem{assumption}{Assumption}
\newtheorem{lem}{Lemma}
\newtheorem{defn}{Definition}
\newtheorem{problem}{Problem}
\let\oldbibliography\thebibliography
\renewcommand{\thebibliography}[1]{%
  \oldbibliography{#1}%
}
\definecolor{darkgreen}{rgb}{0.0, 0.5, 0.0}
\newcommand{\yong}[1]{{\color{black} #1}}
\newcommand{\moh}[1]{{\color{black} #1}}
\newcommand{\yongn}[1]{{\color{black} #1}}
\newcommand{\yongs}[1]{{\color{black} #1}}
\newcommand{\moha}[1]{{\color{black} #1}}
\newcommand{\moham}[1]{{\color{black} #1}}
\newcommand{\mo}[1]{{\color{black} #1}}
\newcommand{\yo}[1]{{\color{black} #1}}
\newcommand{\mk}[1]{{\color{black} #1}}
\newcommand{\sy}[1]{{\color{black} #1}}
\newcommand{\mj}[1]{{\color{black} #1}}
\begin{document}

\title{\LARGE \bf 
Resilient State Estimation for Nonlinear Discrete-Time Systems via Input and State Interval Observer Synthesis}

\author{%
Mohammad Khajenejad, Zeyuan Jin, Thach Ngoc Dinh and Sze Zheng Yong\\
\thanks{
M. Khajenejad is with the 
University of California, San Diego, 
CA, USA. Z. Jin is with 
Arizona State University, Tempe, AZ, USA. T.N. Dinh is with Conservatoire National des Arts et Métiers (CNAM), CEDRIC-Laetitia, 
Paris, 
France. S.Z. Yong is with 
Northeastern University, Boston,
MA, USA. (e-mail: mkhajenejad@ucsd.edu, zjin43@asu.edu, ngoc-thach.dinh@lecnam.net, s.yong@northeastern.edu).}
\thanks{This work is partially supported by NSF grant CNS-2313814.}
}

\maketitle
\thispagestyle{empty}
\pagestyle{empty}

\begin{abstract}
This paper addresses the problem of \mo{resilient state estimation and attack reconstruction for bounded-error nonlinear discrete-time systems with nonlinear observations/constraints, where both sensors and actuators \sy{can be} compromised \yo{by false data injection} 
attack signals/unknown inputs}. 
\mo{By leveraging} mixed-monotone decomposition of nonlinear \mo{functions, as well as} affine parallel outer-approximation of the observation functions, \mo{along with} introducing auxiliary states \mo{to cancel out the effect of the attacks/unknown inputs}, 
our proposed observer recursively computes interval estimates that by construction, contain the true states and unknown inputs of the system. Moreover, we provide several semi-definite programs 
to synthesize observer gains \mo{to ensure} input-to-state \mo{stability of the proposed observer} and \mo{optimality of} the design in the sense of \yo{minimum  $\mathcal{H}_{\infty}$ gain}.
 \end{abstract}
\section{Introduction} 

\yo{State estimation and unknown input reconstruction are indispensable in various} 
engineering applications such as aircraft tracking, fault detection, attack detection and mitigation in cyber-physical systems (CPS) and urban transportation \cite{liu2011robust,yong2016tcps,yong2018simultaneous}. \mo{Particularly, 
set-membership approaches have been proposed \yo{for bounded-error systems} to provide hard
accuracy bounds, which is especially useful for obtaining robustness guarantees \mj{for} safety-critical
systems. Moreover, since 
attackers may be
strategic in adversarial settings, the ability to simultaneously
estimate states and inputs without imposing any assumptions
on the unknown inputs\yo{/attack signals} is desirable and often crucial.}

\emph{Literature review.} 
Numerous studies in the literature have investigated \mo{\emph{secure estimation}, i.e., how to accurately estimate the states of a system when it is under attack or subject to adversarial signals. For instance, {secure state estimation} and control problem was addressed in the presence of \yo{false data injection attacks on both the actuators and sensors} 
in \cite{chen2021resilient}, in which a $\chi^2$ detector was proposed to detect malicious attacks. 
The research in \cite{wu2018secure} proposed a sliding-mode observer to simultaneously estimate system states and attacks, while the work in \cite{mousavinejad2018novel} provided a projected sliding-mode observer-based estimation approach to reconstruct system states. \yo{Further, the work in \cite{corradini2017robust}  
\yo{reconstructed} attack signals from the
equivalent output injection signal using a sliding-mode observer, while in \cite{lu2017secure}, an attack was considered as an auxiliary state and estimated by employing a robust switching Luenberger observer assuming sparsity.}
However, all the aforementioned works considered stochastic/Gaussian noise and hence do not apply to \yo{the} bounded-error setting \yo{we consider in this paper}, 
where noise/disturbance signals are assumed to be distribution-free and bounded.} 

\yo{A related body of literature that could be applied to resilient state estimation in the bounded-error setting is that of unknown input interval observers. Particularly,} 
the works in \cite{ellero2019unknown,wang2021novel,marouani2021unknown} considered the problem of designing 
unknown input interval observer\yo{s} for continuous-time linear parameter varying (LPV), uncertain linear time-invariant (LTI) and discrete-time switched linear systems, respectively, where  the authors \yo{in \cite{ellero2019unknown}} formulated the necessary Metzler property as part of a semi-definite program. A similar problem was considered for nonlinear continuous-time systems with linear observations in \cite{wei2021hybrid}. 
However, \yo{these approaches are} 
not suitable for general discrete-time nonlinear systems \yo{and} 
the unknown input signal \mo{does not affect the output/measurement equation} \yo{(needed for representing false data injection attacks on the sensors)} in either of \yo{the works in} \cite{ellero2019unknown,wang2021novel,marouani2021unknown,wei2021hybrid}.

\yo{On the other hand, while}  our previous works \cite{khajenejad2020simultaneousfullrank,khajenejad2020simultaneous} \yo{do consider} the design of state and unknown input interval observers 
for nonlinear discrete-time systems with nonlinear observations, 
no stabilizing gains were synthesized in \cite{khajenejad2020simultaneousfullrank,khajenejad2020simultaneous}. \mo{We aim to address this shortcoming in this paper}. 

 \emph{Contributions.} 
By leveraging a combination of mixed-monotone decomposition \mo{of nonlinear functions} \cite{moh2022intervalACC,khajenejad2022h} and parallel affine outer-approximation of observation functions {\cite{singh2018mesh}}, we synthesize a resilient interval observer, i.e., a discrete-time dynamical system 
that by construction, \emph{simultaneously} returns interval-valued estimates of states and unknown inputs \yo{(representing false data injection signals on both the actuators and sensors)} for \moh{a broad range of nonlinear discrete-time} systems with nonlinear observations. Our proposed design is a significant improvement to our previous input and state interval observer designs \cite{khajenejad2020simultaneous,khajenejad2020simultaneousfullrank}, in which no stabilizing gains were considered and so the stability of the 
\yo{previous observer designs} only hinged upon some dynamical systems properties. Moreover, in contrast to many  
 unknown input (interval) observer designs in the literature, our design considers 
 arbitrary {unknown} input signals 
 {with no assumptions of} \moham{\emph{a priori} known intervals}, being stochastic with zero mean (as is often assumed for noise) or bounded. 
Further, we provide sufficient conditions 
for the input-to-state-stability of the proposed observer, which at the same time ensures the optimality of the design in the sense of \yo{minimum $\mathcal{H}_{\infty}$ gain} by solving semi-definite programs. 

\section{Preliminaries}
{\emph{{Notation}.}} {\color{black}$\lor$ denotes the logical disjunction (the OR truth-functional operator)}. $\mathbb{R}^n,\mathbb{R}^{n  \times p},\mathbb{D}_n,\mathbb{N},\mathbb{N}_n,{\mathbb{R}_{\geq 0}}$ {and $\mathbb{R}_{>0}$} denote the $n$-dimensional Euclidean space and the sets of $n$ by $p$ matrices, $n$ by $n$ diagonal matrices, natural numbers \yo{(including 0)}, natural numbers \yo{from 1 to $n$,} 
{{non-negative} and positive real numbers,} respectively, while $\mathbb{M}_n$ 
 denote{s} the set of all $n$ by $n$ Metzler {matrices}, {i.e., square matrices 
 {whose} off-diagonal {elements} are non-negative. {\color{black}Euclidean norm of a vector $x \in \mathbb{R}^n$ is denoted by $\|x\|_{2}\mo{\triangleq \sqrt{x^\top x}}$.}} 
 For $M \in \mathbb{R}^{n \times p}$, $M_{ij}$ denotes $M$'s entry in the $i$'th row and the $j$'th column, $M^{\oplus}\triangleq \max(M,\mathbf{0}_{n,p})$, $M^{\ominus}=M^{\oplus}-M$ and $|M|\triangleq M^{\oplus}+M^{\ominus}$, where $\mathbf{0}_{n,p}$ is the zero matrix in $\mathbb{R}^{n \times p}$, \yong{while $\textstyle{\mathrm{sgn}}(M) \in \mathbb{R}^{n \times p}$ is the element-wise sign of $M$ with $\textstyle{\mathrm{sgn}}(M_{ij})=1$ if $M_{ij} \geq 0$ and $\textstyle{\mathrm{sgn}}(M_{ij})=-1$, otherwise.} 
 {$M \succ 0$ and $M \prec 0$ (or $M \succeq 0$ and $M \preceq 0$) denote that $M$ is positive and negative (semi-)definite, respectively}. {Further, a function $f:S \subseteq \mathbb{R}^n \to \mathbb{R}$, where $0 \in S$, is positive definite if $f(x) >0$ for all $x \in S{\, \setminus \{0\}}$, and $f(0)=0$.} Finally, an interval {$\mathcal{I} \triangleq [\underline{{z}},\overline{{z}}]  \subset 
\mathbb{R}^n$} is the set of all real vectors ${z \in \mathbb{R}^{n_z}}$ that satisfies $\underline{{z}} \le {z} \le \overline{{z}}$ {(component-wise)}, where 
$\|\overline{{z}}-\underline{{z}}\|{_{\infty}\triangleq \max_{i \in \{1,\cdots,{n_z}\}}{|z_i|}}$ is the 
{interval} width of $\mathcal{I}$. 

Next, we \mk{review} some related results and definitions. 

\begin{prop}[Jacobian Sign-Stable Decomposition {\cite[Proposition 2]{moh2022intervalACC}}]\label{prop:JSS_decomp}
If a mapping $f: {\mathcal{Z}} \subset \mathbb{R}^{n_z} \to \mathbb{R}^p$ has Jacobian matrices satisfying $J^f({x}) \in [\underline{J}^f,\overline{J}^f]$, $\forall {z \in \mathcal{Z}}$, where $\underline{J}^f,\overline{J}^f \in \mathbb{R}^{p \times n_z}$ are known matrices, then the mapping $f$ can be decomposed into an additive remainder-form:
\begin{align}\label{eq:JSS_decomp}
\forall {z \in \mathcal{Z}},f({\mk{z}})=H{z}+\mu({z}),
\end{align}
 {where} the matrix $H\in\mathbb{R}^{\mk{p \times n_z}}$ {satisfies}
 \begin{align}\label{eq:H_decomp}
 \forall (i,j) \in \mathbb{N}_p \times \mathbb{N}_{n_z}, H_{ij}=\underline{J}^f_{ij} \ \lor H_{ij}=\overline{J}^f_{i,j} ,    
 \end{align}
  {and} $\mu(\cdot)$ and $H{z}$ are {nonlinear and linear} Jacobian sign-stable (JSS) mappings, {respectively,} i.e., the signs of each element of their Jacobian matrices do not change within their domains ($J^\nu_{ij}(\cdot) \geq 0 \ \text{ or } J^\nu_{ij}(\cdot) \leq 0$, $\nu(z) \in \{\mu(z),Hz\}$).
\end{prop}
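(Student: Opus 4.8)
The plan is to produce the decomposition by an explicit, entrywise construction of the matrix $H$ and then to \emph{define} the nonlinear remainder as the residual $\mu \triangleq f - H(\cdot)$, so that the additive form \eqref{eq:JSS_decomp} holds by construction; the substantive part of the claim then reduces to verifying that both summands are Jacobian sign-stable.

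First, I would fix $H \in \mathbb{R}^{p \times n_z}$ by selecting, independently for each index pair $(i,j) \in \mathbb{N}_p \times \mathbb{N}_{n_z}$, either $H_{ij} = \underline{J}^f_{ij}$ or $H_{ij} = \overline{J}^f_{ij}$; any such selection is admissible and satisfies \eqref{eq:H_decomp} verbatim. Second, I would set $\mu(z) \triangleq f(z) - Hz$ for all $z \in \mathcal{Z}$, so that $f(z) = Hz + \mu(z)$ holds identically on $\mathcal{Z}$.

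Third comes the verification. For the linear term, $Hz$ has constant Jacobian $H$, whose entries trivially do not change sign on $\mathcal{Z}$, so $Hz$ is JSS. For the remainder, linearity of differentiation gives $J^\mu(z) = J^f(z) - H$, hence $J^\mu_{ij}(z) = J^f_{ij}(z) - H_{ij}$ for every $z \in \mathcal{Z}$. Combining this with the hypothesis $J^f_{ij}(z) \in [\underline{J}^f_{ij}, \overline{J}^f_{ij}]$: if $H_{ij}$ was chosen to be $\overline{J}^f_{ij}$, then $J^\mu_{ij}(z) \in [\underline{J}^f_{ij} - \overline{J}^f_{ij},\, 0]$, i.e.\ $J^\mu_{ij}(z) \le 0$ for all $z \in \mathcal{Z}$; if $H_{ij}$ was chosen to be $\underline{J}^f_{ij}$, then $J^\mu_{ij}(z) \in [0,\, \overline{J}^f_{ij} - \underline{J}^f_{ij}]$, i.e.\ $J^\mu_{ij}(z) \ge 0$ for all $z \in \mathcal{Z}$. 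In either case the sign of the $(i,j)$ entry of $J^\mu$ is constant over $\mathcal{Z}$, so $\mu$ is JSS, which completes the argument.

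I do not anticipate a genuine obstacle: the proposition is essentially a bookkeeping statement and the construction is ``free.'' The only step demanding a little care is the sign accounting in the third step --- keeping straight that subtracting the \emph{upper} bound $\overline{J}^f_{ij}$ makes the corresponding partial derivative of $\mu$ uniformly non-positive, while subtracting the \emph{lower} bound $\underline{J}^f_{ij}$ makes it uniformly non-negative. It is also worth noting the implicit standing assumption that $f$ is (at least once) differentiable on $\mathcal{Z}$, which is what is needed for $J^f$, and therefore $J^\mu$, to exist pointwise; no convexity, connectedness of $\mathcal{Z}$, or further regularity is used.
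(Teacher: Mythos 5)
Your proof is correct, and it is essentially the standard argument: the paper states this proposition as a recalled preliminary (citing Proposition 2 of the referenced work), whose proof is exactly your construction --- fix $H$ entrywise from the Jacobian bounds, define $\mu \triangleq f - H(\cdot)$, and note $J^\mu = J^f - H$ has entrywise constant sign depending on whether the upper or lower bound was subtracted. Your sign accounting and the remark on the implicit differentiability assumption are both accurate.
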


\begin{defn}[Mixed-Monotonicity {and} Decomposition Functions] \cite[Definition 1]{abate2020tight},\cite[Definition 4]{yang2019sufficient} \label{defn:dec_func}
Consider the discrete-time dynamical system $x_{k+1}= g(x_k)$, 
with initial state $x_0 \in \mathcal{X}_0 \triangleq [\underline{x}_0,\overline{x}_0] {\subset \mathbb{R}^{n}}$.
{Furthermore}, ${g}:{\mathcal{X}} \subset \mathbb{R}^{n} \to \mathbb{R}^{{n}}$ is the vector field, {and} 
$\mathcal{X}$ is the entire state space.
A function $g_d: {\mathcal{X}\times \mathcal{X}} \to \mathbb{R}^{n}$ is a {discrete-time mixed-monotone} decomposition mapping for the vector field $g$ if it satisfies the following conditions: i) $g_d({x,x})=g({x})$, ii) ${\color{black}g_d}$ is monotone increasing in its first argument, i.e., $\hat{{x}}\ge {x} \Rightarrow g_d(\hat{{x}},{x}') \geq g_d(x,x')$, and iii) $g_d$ is monotone decreasing in its second argument, i.e., $\hat{x}\ge x \Rightarrow g_d({x}',\hat{{x}}) \leq g_d({x}',{x})$.
\end{defn}

\begin{prop}[Tight and Tractable Decomposition Functions for JSS Mappings] \label{prop:tight_decomp}\cite[Proposition 4 \& Lemma 3]{moh2022intervalACC}
Suppose $\mu: {\mathcal{Z}} \subset \mathbb{R}^{n_z} \to \mathbb{R}^p$ is a JSS mapping on its domain. Then, for each $\mu_i$, $i \in \mathbb{N}_p$, its tight decomposition function is: 
\begin{align}\label{eq:JJ_decomp}
\mu_{d,i}({z}_1,{z}_2)\hspace{-.0cm}=\hspace{-.0cm}\mu_i(D^i{z}_1\hspace{-.0cm}+\hspace{-.0cm}(I_{n}\hspace{-.0cm}-\hspace{-.0cm}D^i){z}_2), 
\end{align}
for any ordered ${z_1, z_2 \in \mathcal{Z}}$, with a binary diagonal matrix $D^i$ that is determined by 
the vertex of the interval $[{z}_1,{z}_2]$ that minimizes the function $\mu_i$ (if $z_1 < z_2$) or the vertex of the interval $[z_2,z_1]$ that maximizes $\mu_i$ (if $z_2 \leq z_1$), 
i.e., 
$$D^i=\textstyle{\mathrm{diag}}(\max(\textstyle{\mathrm{sgn}}({\overline{J}^{\mu}_i}),\mathbf{0}_{1,{n_z}})).$$
Moreover, if the JSS mapping $\mu$ is a remainder term of a JSS decomposition of a function $f$ as discussed in Proposition \ref{prop:JSS_decomp}, then for any interval domain $\underline{z} \leq z \leq \overline{z}$, with $z,\underline{z},\overline{z} \in \mathcal{Z}$ and $\varepsilon \triangleq \overline{z}-\underline{z}$, the following inequality holds:
$\delta^{\mu}_d \sy{\triangleq \mu_d(\overline{z},\underline{z})-\mu_d(\underline{z},\overline{z})} \leq \overline{F}_{\mu}\varepsilon$, 
\yo{with $\overline{F}_{\mu}\hspace{-.cm}\triangleq\hspace{-.cm}2\max(\overline{J}_f\hspace{-.cm}-\hspace{-.cm}H,\mathbf{0}_{p,n_z})\hspace{-.07cm}-\hspace{-.07cm}\underline{J}_f\hspace{-.07cm}+\hspace{-.07cm}H$}
and $H \in \mathbb{R}^{p \times n_z}$ given in Proposition \ref{prop:JSS_decomp}.
\end{prop}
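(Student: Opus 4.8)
The plan is to establish the two assertions separately. \emph{For the tight decomposition formula:} Since $\mu$ is JSS on $\mathcal{Z}$, every component $\mu_i$ is, in each coordinate $z_j$, either non-decreasing or non-increasing throughout $\mathcal{Z}$, and the binary diagonal matrix $D^i=\mathrm{diag}(\max(\mathrm{sgn}(\overline{J}^\mu_i),\mathbf{0}_{1,n_z}))$ is precisely the indicator of the non-decreasing coordinates (so $D^i_{jj}=1$ iff $\overline{J}^\mu_{ij}\ge 0$). I would first verify that $\mu_{d,i}(z_1,z_2)=\mu_i(D^iz_1+(I_n-D^i)z_2)$ satisfies the three conditions of Definition~\ref{defn:dec_func}: (i) is immediate on setting $z_1=z_2$; (ii) and (iii) hold because raising the first (resp.\ second) argument moves the evaluation point of $\mu_i$ only along coordinates on which $\mu_i$ is non-decreasing (resp.\ non-increasing), so a coordinate-by-coordinate (or mean-value) argument shows $\mu_i$ cannot decrease (resp.\ increase). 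For \emph{tightness}, I would invoke the standard consequence of Definition~\ref{defn:dec_func} that every valid $\mu_{d,i}$ obeys $\mu_{d,i}(z_1,z_2)\le\mu_i(\xi)\le\mu_{d,i}(z_2,z_1)$ for all $\xi\in[z_1,z_2]$ when $z_1\le z_2$, so $\mu_{d,i}(z_1,z_2)\le\min_{\xi\in[z_1,z_2]}\mu_i(\xi)$ and $\mu_{d,i}(z_2,z_1)\ge\max_{\xi\in[z_1,z_2]}\mu_i(\xi)$; since a coordinatewise-monotone function attains its extrema over a box at two opposite vertices, and the minimizing vertex of $[z_1,z_2]$ keeps $z_{1,j}$ on the non-decreasing coordinates and $z_{2,j}$ on the rest --- i.e.\ is exactly $D^iz_1+(I_n-D^i)z_2$ --- the proposed function attains both bounds with equality (and the very same $D^i$ returns the maximizing vertex of $[z_2,z_1]$ when $z_2\le z_1$), hence it is the tightest decomposition function.

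\emph{For the interval-width bound:} Fix $i\in\mathbb{N}_p$ and put $\xi^+\triangleq D^i\overline{z}+(I_n-D^i)\underline{z}$ and $\xi^-\triangleq D^i\underline{z}+(I_n-D^i)\overline{z}$, so that $\delta^\mu_{d,i}=\mu_i(\xi^+)-\mu_i(\xi^-)$ and $\xi^+-\xi^-=(2D^i-I_n)\varepsilon$. Applying the integral mean value theorem along the segment from $\xi^-$ to $\xi^+$ gives
\begin{align*}
\delta^\mu_{d,i}=\int_0^1 J^\mu_i\!\big(\xi^-+t(\xi^+-\xi^-)\big)(2D^i-I_n)\varepsilon\,dt.
\end{align*}
Using $\varepsilon\ge\mathbf{0}$, the entrywise bounds $J^\mu_{ij}(\cdot)\in[\underline{J}^\mu_{ij},\overline{J}^\mu_{ij}]$, and the fact that $(2D^i-I_n)_{jj}$ equals $+1$ if $D^i_{jj}=1$ and $-1$ if $D^i_{jj}=0$, each integrand entry is bounded above by $\overline{J}^\mu_{ij}\varepsilon_j$ in the former case and by $-\underline{J}^\mu_{ij}\varepsilon_j$ in the latter; integrating yields $\delta^\mu_{d,i}\le\sum_j\gamma_{ij}\varepsilon_j$ with $\gamma_{ij}=\overline{J}^\mu_{ij}$ when $D^i_{jj}=1$ and $\gamma_{ij}=-\underline{J}^\mu_{ij}$ otherwise. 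Finally, since $f(z)=Hz+\mu(z)$ (Proposition~\ref{prop:JSS_decomp}) implies $J^\mu(z)=J^f(z)-H$, hence $\underline{J}^\mu=\underline{J}_f-H$ and $\overline{J}^\mu=\overline{J}_f-H$, the claimed $\overline{F}_\mu$ reads entrywise as $\overline{F}_{\mu,ij}=2\max(\overline{J}^\mu_{ij},0)-\underline{J}^\mu_{ij}$; if $D^i_{jj}=1$ then $\overline{J}^\mu_{ij}\ge0$ and $\overline{F}_{\mu,ij}=\overline{J}^\mu_{ij}+(\overline{J}^\mu_{ij}-\underline{J}^\mu_{ij})\ge\overline{J}^\mu_{ij}=\gamma_{ij}$, while if $D^i_{jj}=0$ then $\overline{J}^\mu_{ij}<0$ and $\overline{F}_{\mu,ij}=-\underline{J}^\mu_{ij}=\gamma_{ij}$; summing over $j$ gives $\delta^\mu_{d,i}\le(\overline{F}_\mu\varepsilon)_i$ for all $i$, i.e.\ $\delta^\mu_d\le\overline{F}_\mu\varepsilon$.

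The step I expect to be the main obstacle is the tightness claim: one must justify that the extrema of a coordinatewise-monotone map over a box occur at the two opposite vertices singled out by $D^i$ \emph{and} rule out any strictly tighter decomposition, while correctly handling degenerate coordinates --- those with $J^\mu_{ij}\equiv 0$, where either value of $D^i_{jj}$ is admissible and the resulting bounds agree --- so that the single expression for $D^i$ serves both orderings of $(z_1,z_2)$. The mean-value estimate and the two-case bookkeeping for the width bound are then routine once the identity $\xi^+-\xi^-=(2D^i-I_n)\varepsilon$ is in hand.
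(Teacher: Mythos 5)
Your proposal is correct, but note that the paper itself offers no proof of this proposition: it is imported verbatim from \cite[Proposition 4 \& Lemma 3]{moh2022intervalACC}, so there is nothing in this manuscript to compare line-by-line against. Your reconstruction follows the same route as the cited source: (a) verify the three conditions of Definition~\ref{defn:dec_func} coordinatewise, then obtain tightness from the sandwich $\mu_{d,i}(z_1,z_2)\le \mu_i(\xi)\le \mu_{d,i}(z_2,z_1)$ (which indeed follows from Definition~\ref{defn:dec_func} via $\mu_{d,i}(z_1,z_2)\le\mu_{d,i}(\xi,\xi)=\mu_i(\xi)\le\mu_{d,i}(z_2,z_1)$) together with the fact that a coordinatewise-monotone function attains its box extrema at the opposite vertices selected by $D^i$; and (b) the mean-value/integral bound along the segment from $\xi^-$ to $\xi^+$ with the identity $\xi^+-\xi^-=(2D^i-I_n)\varepsilon$, the two-case bookkeeping $\gamma_{ij}\in\{\overline{J}^\mu_{ij},-\underline{J}^\mu_{ij}\}$, and the translation $\overline{J}^\mu=\overline{J}_f-H$, $\underline{J}^\mu=\underline{J}_f-H$, which is exactly how $\overline{F}_\mu$ arises in the reference. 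Two small caveats: your argument tacitly assumes the box $[\underline{z},\overline{z}]$ (and the segment joining $\xi^-,\xi^+$) lies in $\mathcal{Z}$ and that $\mu$ is differentiable there, which is consistent with Assumption~\ref{assumption:mix-lip} but worth stating; and the degenerate situation is slightly broader than $J^\mu_{ij}\equiv 0$ --- with the convention $\mathrm{sgn}(0)=1$, a coordinate with $\overline{J}^\mu_{ij}=0$ but $J^\mu_{ij}\le 0$ (as happens when $H_{ij}=\overline{J}^f_{ij}$ in Proposition~\ref{prop:JSS_decomp}) is assigned $D^i_{jj}=1$, so monotonicity in that coordinate must be argued via the sign-stability of $J^\mu_{ij}$ dictated by the choice of $H_{ij}$, not via $\overline{J}^\mu_{ij}$ alone; your width bound in part (b), however, is unaffected by this and goes through as written.
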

Consequently, by applying Proposition \ref{prop:tight_decomp} to the Jacobian sign-stable decomposition 
obtained using Proposition \ref{prop:JSS_decomp}, a tight and tractable decomposition function can be obtained (cf. details 
in \cite{moh2022intervalACC}). Furthermore, in 
\yo{the case that} the mapping is not JSS, a tractable algorithm has been introduced in \cite[Algorithm 1]{khajenejad2021tight} to compute \emph{tight remainder-form decomposition functions} for a very broad class of nonlinear functions. 
\begin{defn}[Embedding System]\cite[Definition 6]{khajenejad2022h}\label{def:embedding}
For a discrete-time dynamical system $x_{k+1}= g(x_k)$ defined over mapping  ${g}:{\mathcal{X}} \subset \mathbb{R}^{n} \to \mathbb{R}^{{n}}$ with a corresponding decomposition function ${g}_{d}(\cdot)$, its embedding system is a $2n$-dimensional system with initial condition $\begin{bmatrix} \overline{x}_0^\top & \underline{x}_0^\top\end{bmatrix}^\top$  defined as
$\begin{bmatrix}{\underline{x}}^\top_{k+1} & {\overline{x}}^\top_{k+1} \end{bmatrix}^\top=\begin{bmatrix}  \underline{g}_{d}^\top(\underline{x}_k , \overline{x}_k) &  \overline{g}_{d}^\top(\overline{x}_k , \underline{x}_k) \end{bmatrix}^\top$.
\end{defn}
Note that according to \cite[Proposition 3]{khajenejad2021tight}, the embedding system in Definition \ref{def:embedding} with decomposition function $g_d$ corresponding to the dynamics $x_{k+1}=g(x_{k})$ has a \emph{state framer property}, i.e., its solution is guaranteed to frame the unknown state trajectory $x_k$, i.e.,  $\underline{x}_k\le x_k\le \overline{x}_k$ for all $k\in\mathbb{N}$.

\moha{Next, we will briefly restate our previous result in \cite{singh2018mesh}, tailoring it \yongs{specifically for intervals to help with computing} 
affine bounding functions for our functions.}
\begin{prop}\cite[Affine Outer-Approximation]{singh2018mesh}\label{prop:affine abstractions}
Consider the 
\yo{function} $g(.):\mathcal{B} \subset \mathbb{R}^n \to \mathbb{R}^m$, where $\mathcal{B}$ is an interval 
with $\overline{x},\underline{x},\mathcal{V}_{\mathcal{B}}$ being its maximal, minimal and set of vertices, respectively. Suppose $\overline{A}_{\mathcal{B}},\underline{A}_{\mathcal{B}}, \overline{e}_{\mathcal{B}}, \underline{e}_{\mathcal{B}},\theta_{\mathcal{B}}$ \moham{is a solution of} the following \yongs{linear program (LP):}   
\begin{align} \label{eq:abstraction}
&\min_{\theta,\overline{A},\underline{A},\overline{e},\underline{e}} {\theta} \\
\nonumber & \quad \quad s.t \ \underline{A} {x}_{s}+\underline{e}+\sigma \leq g({x}_{s}) \leq \overline{A} {x}_{s}+\overline{e}-\sigma, \\
\nonumber &\quad \quad \quad \ (\overline{A}-\underline{A}) {x}_{s}+\overline{e}-\underline{e}-2\sigma \leq \theta \mathbf{1}_m , \ \forall x_s \in \mathcal{V}_{\mathcal{B}},
\end{align}  
where $\mathbf{1}_m \in \mathbb{R}^m$ is a vector of ones and $\sigma$ can be computed via \yongs{\cite[Proposition 1]{singh2018mesh}} 
\moha{for different function classes}. Then, $\underline{A}_{\mk{\mathcal{B}}} {x}+\underline{e}_{\mk{\mathcal{B}}} \leq g(x) \leq  \overline{A}_{\mk{\mathcal{B}}} {x}+\overline{e}_{\mk{\mathcal{B}}}, \forall x \in \mathcal{B}$.  
\end{prop}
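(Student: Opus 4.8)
The plan is to reduce the statement to a short convexity argument, once the defining property of the constant $\sigma$ from \cite[Proposition 1]{singh2018mesh} is imported. The first step is to recall that property in the form I need it: for the function class to which $g$ belongs (e.g., Lipschitz, $\mathcal{C}^1$ with bounded gradient, or $\mathcal{C}^2$ with bounded Hessian), the (possibly vector-valued) $\sigma$ returned by \cite[Proposition 1]{singh2018mesh} over-bounds, entry-wise, the error between $g$ and the vertex interpolant of $g$ on the box $\mathcal{B}$. Equivalently, for every $x \in \mathcal{B}$ there exist convex weights $\{\lambda_s(x)\}_{x_s \in \mathcal{V}_{\mathcal{B}}}$ --- namely the multilinear/mesh interpolation weights of \cite{singh2018mesh} --- with $\lambda_s(x) \geq 0$, $\sum_{x_s \in \mathcal{V}_{\mathcal{B}}} \lambda_s(x) = 1$, $x = \sum_{x_s \in \mathcal{V}_{\mathcal{B}}} \lambda_s(x)\, x_s$, and, entry-wise, $-\sigma \leq g(x) - \sum_{x_s \in \mathcal{V}_{\mathcal{B}}} \lambda_s(x)\, g(x_s) \leq \sigma$.

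The second step is to propagate the vertex constraints of the LP \eqref{eq:abstraction} to an arbitrary interior point. Fix $x \in \mathcal{B}$ and write $\lambda_s \triangleq \lambda_s(x)$. For the upper bound I would multiply the feasibility constraint $g(x_s) \leq \overline{A}_{\mathcal{B}} x_s + \overline{e}_{\mathcal{B}} - \sigma$ by $\lambda_s \geq 0$ and sum over $x_s \in \mathcal{V}_{\mathcal{B}}$; since $x \mapsto \overline{A}_{\mathcal{B}} x + \overline{e}_{\mathcal{B}}$ is affine and the weights are convex with $\sum_s \lambda_s x_s = x$, the weighted right-hand side collapses to $\overline{A}_{\mathcal{B}} x + \overline{e}_{\mathcal{B}} - \sigma$, so $\sum_s \lambda_s g(x_s) \leq \overline{A}_{\mathcal{B}} x + \overline{e}_{\mathcal{B}} - \sigma$. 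Combining with $g(x) \leq \sum_s \lambda_s g(x_s) + \sigma$ from Step~1 gives $g(x) \leq \overline{A}_{\mathcal{B}} x + \overline{e}_{\mathcal{B}}$; the lower bound $\underline{A}_{\mathcal{B}} x + \underline{e}_{\mathcal{B}} \leq g(x)$ follows symmetrically from $\underline{A}_{\mathcal{B}} x_s + \underline{e}_{\mathcal{B}} + \sigma \leq g(x_s)$ together with $\sum_s \lambda_s g(x_s) \leq g(x) + \sigma$. As $x \in \mathcal{B}$ is arbitrary, this is exactly the claimed enclosure on all of $\mathcal{B}$. (The third, $\theta$-constraint is not needed for correctness; applying the same weighting to it yields $(\overline{A}_{\mathcal{B}} - \underline{A}_{\mathcal{B}}) x + \overline{e}_{\mathcal{B}} - \underline{e}_{\mathcal{B}} \leq \theta_{\mathcal{B}} \mathbf{1}_m + 2\sigma$ on $\mathcal{B}$, i.e., it certifies that $\theta_{\mathcal{B}}$ controls the width of the enclosure, which is what makes the abstraction optimal rather than merely valid.)

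The only delicate point, and hence where I would be most careful, is Step~1: the computation in Step~2 is legitimate only if $\sigma$ bounds the interpolation error relative to the \emph{same} convex weights $\lambda_s(x)$ that reconstruct $x$ from the vertices of $\mathcal{B}$ --- as opposed to, say, an $L^\infty$ bound against some best-fit affine surrogate, which would not close the argument. Establishing that $\sigma$ can indeed be taken in this ``vertex-interpolation-error'' form for each admissible function class is precisely the content of \cite[Proposition 1]{singh2018mesh}; granting it, the remainder of the proof is the one-line calculation above, using only linearity of $x \mapsto \overline{A}_{\mathcal{B}} x + \overline{e}_{\mathcal{B}}$ and $x \mapsto \underline{A}_{\mathcal{B}} x + \underline{e}_{\mathcal{B}}$ and nonnegativity of the $\lambda_s$, with no further nonlinear estimates.
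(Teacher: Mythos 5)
Your argument is correct: the paper itself gives no proof of this proposition (it is imported verbatim from \cite{singh2018mesh}), and your convexity/interpolation argument --- writing $x\in\mathcal{B}$ as a convex combination of vertices, summing the $\sigma$-tightened vertex constraints, and invoking the vertex-interpolation-error characterization of $\sigma$ from \cite[Proposition 1]{singh2018mesh} --- is essentially the same reasoning used in that reference. You also correctly isolate the only genuine dependency, namely that $\sigma$ must bound $g$'s deviation from the convex-weights interpolant of its vertex values (which is exactly what the cited proposition supplies for each function class), so no gap remains.
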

\begin{cor} \label{cor:parallel_abst}
By taking the average of upper and lower affine abstractions and adding/subtracting half of the maximum distance, it is straightforward to \yo{``parallelize''} the above upper and lower abstractions as ${A}_g{x}+\underline{\epsilon} \leq g(x) \leq  A_g{x}+\overline{\epsilon}$, or equivalently
$g(x)=A_gx+\epsilon, \epsilon \in [\underline{\epsilon},\overline{\epsilon}]$,
where $A_g\triangleq (1/2)(\overline{A}+\underline{A})$, $\underline{\epsilon}\triangleq (1/2)(\overline{e}+\underline{e}-\theta \mathbf{1}_m)$ and $\overline{\epsilon} \triangleq (1/2)(\overline{e}+\underline{e}+\theta \mathbf{1}_m)$. We call $A_g$ and $\epsilon$ \yo{the} parallel affine outer-approximation slope and outer-approximation error of function $g$ on 
$\mathcal{B}$, respectively. 
\end{cor}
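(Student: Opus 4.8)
The plan is to establish the sandwich $A_g x+\underline{\epsilon}\le g(x)\le A_g x+\overline{\epsilon}$ in two moves: first a generic ``midpoint'' argument that turns any valid affine envelope of $g$ into a parallel one whose constant half-width equals the largest vertical gap of the envelope over $\mathcal{B}$, and then an identification of that gap with $\theta_{\mathcal{B}}\mathbf{1}_m$ using the linear program \eqref{eq:abstraction}. Throughout I would abbreviate the optimizers $\overline{A}_{\mathcal{B}},\underline{A}_{\mathcal{B}},\overline{e}_{\mathcal{B}},\underline{e}_{\mathcal{B}},\theta_{\mathcal{B}}$ returned by Proposition~\ref{prop:affine abstractions} as $\overline{A},\underline{A},\overline{e},\underline{e},\theta$, so that $A_g=\tfrac12(\overline{A}+\underline{A})$, $\underline{\epsilon}=\tfrac12(\overline{e}+\underline{e}-\theta\mathbf{1}_m)$ and $\overline{\epsilon}=\tfrac12(\overline{e}+\underline{e}+\theta\mathbf{1}_m)$.

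For the first move, Proposition~\ref{prop:affine abstractions} already supplies the (non-parallel) envelope $\ell(x)\triangleq\underline{A}x+\underline{e}\le g(x)\le\overline{A}x+\overline{e}\triangleq u(x)$ for all $x\in\mathcal{B}$. I would write $g(x)=\tfrac12(\ell(x)+u(x))+(g(x)-\tfrac12(\ell(x)+u(x)))$ and note that $\tfrac12(\ell(x)+u(x))=A_gx+\tfrac12(\overline{e}+\underline{e})$, while the residual $g(x)-A_gx-\tfrac12(\overline{e}+\underline{e})$ is, componentwise, bounded in magnitude by $\tfrac12(u(x)-\ell(x))$, precisely because $\ell(x)\le g(x)\le u(x)$. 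Hence it suffices to bound the affine ``width'' map $u(x)-\ell(x)=(\overline{A}-\underline{A})x+\overline{e}-\underline{e}$ from above on $\mathcal{B}$: once $(\overline{A}-\underline{A})x+\overline{e}-\underline{e}\le\theta\mathbf{1}_m$ is known for every $x\in\mathcal{B}$, we get $g(x)\in A_gx+\tfrac12(\overline{e}+\underline{e})\pm\tfrac12\theta\mathbf{1}_m=[A_gx+\underline{\epsilon},\,A_gx+\overline{\epsilon}]$, which is the assertion (the equivalent form ``$g(x)=A_gx+\epsilon,\ \epsilon\in[\underline{\epsilon},\overline{\epsilon}]$'' being an immediate rewriting).

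For the second move, I would use that each of the $m$ coordinates of $x\mapsto(\overline{A}-\underline{A})x+\overline{e}-\underline{e}$ is affine, so it attains its maximum over the hyperrectangle $\mathcal{B}$ at one of the vertices in $\mathcal{V}_{\mathcal{B}}$; thus it is enough to control the width at the vertices, which is exactly the role of the third family of constraints in \eqref{eq:abstraction}, namely $(\overline{A}-\underline{A})x_s+\overline{e}-\underline{e}-2\sigma\le\theta\mathbf{1}_m$ at every $x_s\in\mathcal{V}_{\mathcal{B}}$, with $\sigma\ge 0$ the correction computed from \cite[Proposition~1]{singh2018mesh} for the function class at hand. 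Dropping the $-2\sigma$ slack (i.e., absorbing $\sigma$ into the width budget, which is what the interval-tailored specialization of \cite{singh2018mesh} does) yields $(\overline{A}-\underline{A})x+\overline{e}-\underline{e}\le\theta\mathbf{1}_m$ throughout $\mathcal{B}$; chaining this with the first move gives the corollary.

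I expect the genuinely trivial parts to be the midpoint rearrangement and the componentwise maximization of an affine map over a box, and the one point needing care to be the bookkeeping of $\sigma$ in the second move — i.e., ensuring that the $2\sigma$ subtracted in \eqref{eq:abstraction} is not charged against the final half-width $\tfrac12\theta\mathbf{1}_m$ — which is precisely why Proposition~\ref{prop:affine abstractions} is stated specifically for intervals, where the vertices of the box already certify the envelope.
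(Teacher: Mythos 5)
Your midpoint rearrangement and the componentwise vertex maximization of the affine width map are exactly the ``straightforward'' computation the corollary gestures at (the paper gives no separate proof), so the route is the same; the problem is the one step you yourself flagged, which does not go through as written. From the third constraint family of \eqref{eq:abstraction} you know, at each $x_s\in\mathcal{V}_{\mathcal{B}}$, only that $(\overline{A}-\underline{A})x_s+\overline{e}-\underline{e}\le\theta\mathbf{1}_m+2\sigma$; ``dropping the $-2\sigma$ slack'' moves the inequality in the wrong direction when $\sigma\ge 0$, so you have not established $(\overline{A}-\underline{A})x+\overline{e}-\underline{e}\le\theta\mathbf{1}_m$ on $\mathcal{B}$. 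What the two constraint families do give you is the parallel bound \emph{at the vertices}: there $g(x_s)\in[\underline{A}x_s+\underline{e}+\sigma,\ \overline{A}x_s+\overline{e}-\sigma]$, whose midpoint is $A_gx_s+\tfrac12(\overline{e}+\underline{e})$ (the $\pm\sigma$ cancel) and whose half-length is $\tfrac12\bigl[(\overline{A}-\underline{A})x_s+\overline{e}-\underline{e}-2\sigma\bigr]\le\tfrac12\theta\mathbf{1}_m$, i.e., $A_gx_s+\underline{\epsilon}\le g(x_s)\le A_gx_s+\overline{\epsilon}$ for every vertex.

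The genuine gap is the passage from $\mathcal{V}_{\mathcal{B}}$ to all of $\mathcal{B}$: $g$ is not affine, and its interior deviation from any affine interpolant of its vertex values is exactly what $\sigma$ quantifies in \cite[Proposition 1]{singh2018mesh}; that margin is what allows Proposition \ref{prop:affine abstractions} to conclude its (non-parallel) bounds on all of $\mathcal{B}$ from vertex constraints tightened by $\pm\sigma$. To invoke the same mechanism for the slope $A_g$ you would need $g(x_s)\le A_gx_s+\overline{\epsilon}-\sigma$ (and its mirror) at the vertices, which the LP does not supply. What your chain of inequalities honestly certifies over $\mathcal{B}$ is the envelope with half-width $\tfrac12\theta\mathbf{1}_m+\sigma$, namely $A_gx+\underline{\epsilon}-\sigma\le g(x)\le A_gx+\overline{\epsilon}+\sigma$, which coincides with the stated constants only when $\sigma=0$ (e.g., if the envelope constraints are enforced on all of $\mathcal{B}$ rather than only at its vertices). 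So either carry the extra $\pm\sigma$ into $\underline{\epsilon},\overline{\epsilon}$ (harmless downstream, since it merely enlarges the known interval $[\underline{\epsilon},\overline{\epsilon}]$ entering \eqref{eq:gamma}, \eqref{eq:framers}--\eqref{eq:framers_2} and the error system), or justify $\sigma=0$ for the function class at hand; the appeal to the ``interval-tailored specialization'' by itself does not close this, because the vertices certify the envelope only through the $\pm\sigma$ margins, not the slack-free width bound.
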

\section{Problem Formulation} \label{sec:Problem}
\noindent\textbf{\emph{System Assumptions.}} 
Consider the nonlinear discrete-time system with unknown inputs and bounded noise 
\begin{align} \label{eq:system}
\begin{array}{rl}
x_{k+1}&=f(x_k)+Ww_k+G d_k ,\\
y_k&=h(x_k) +Vv_k + H d_k  , 
\end{array}
\end{align}
where at time \yo{$k \in \mathbb{N}$,} 
$x_k \in \mathcal{X} \subset \mathbb{R}^n$, $d_k \in \mathbb{R}^p$ and $y_k \in \mathbb{R}^l$ are the state vector, 
unknown input vector, and  measurement vector, respectively. The process and measurement noise signals $w_k \in \mathbb{R}^n$ and $v_k \in \mathbb{R}^l$ are assumed to be bounded, \yo{i.e., $w_k \in \mathcal{W} \triangleq [\underline{w},\overline{w}\}$, $v_k \in \mathcal{V} \triangleq [\underline{v},\overline{v}]$ with} 
known lower and upper bounds, $\underline{w}$, $\overline{w}$ and $\underline{v}$, $\overline{v}$, respectively. 
We also assume \yongn{that} lower and upper bounds for the initial state, $\underline{x}_0$ and $\overline{x}_0$, {are} available, i.e., $\underline{x}_0 \leq x_0 \leq \overline{x}_0$. The functions $f:\mathbb{R}^n \rightarrow \mathbb{R}^n$, $h:\mathbb{R}^n \rightarrow \mathbb{R}^l$ and matrices $W$, $V$, $G$ and $H$ are known 
and of appropriate dimensions, where $G$ and $H$ 
encode the \emph{locations} at which the unknown input \yongn{(or attack)} signal can affect the system dynamics and measurements. Note that no assumption is made on $H$ to be either the zero matrix (no direct feedthrough), or to have full column rank when there is direct feedthrough (in contrast \yongs{to} \yo{\cite{khajenejad2020simultaneousfullrank})}.  


\noindent \textbf{\emph{Unknown Input (or Attack) Signal Assumptions.}} 
The unknown inputs $d_k$ \yo{(representing false data injection attack signals)} are not constrained to follow any model nor to be a signal of any type (random or strategic), hence no prior `useful' knowledge of the dynamics of $d_k$ is available (independent of $\{d_\ell\}$ $\forall k\neq \ell$, $\{w_\ell\}$ and $\{v_\ell\}$ $ \forall  \ell$). We also do not assume that $d_k$ is bounded or has known bounds and thus, $d_k$ is suitable for representing adversarial 
attack signals.

Next, we briefly introduce a similar system transformation as in \cite{yong2018simultaneous}, which will be used later in our observer structure.

 \noindent \textbf{\emph{System Transformation.}} Let $p_{H}\triangleq {\rm rk} (H)$. Similar to \cite{yong2018simultaneous}, by applying singular value decomposition, we have $H= \begin{bmatrix}U_{1}& U_{2} \end{bmatrix} \begin{bmatrix} \Xi & 0 \\ 0 & 0 \end{bmatrix} \begin{bmatrix} E_{1}^{\, \top} \\ E_{2}^{\, \top} \end{bmatrix}$ with $E_{1} \in \mathbb{R}^{p \times p_{H}}$, $E_{2} \in \mathbb{R}^{p \times (p-p_{H})}$, $\Xi \in \mathbb{R}^{p_{H} \times p_{H}}$ (a diagonal matrix of full rank; so we can define $S\triangleq \Xi^{-1}$),
 $U_{1} \in \mathbb{R}^{l \times p_{H}}$ and $U_{2} \in \mathbb{R}^{l \times (l-p_{H})}$. 
Then, since 
$D\triangleq \begin{bmatrix} E_{1} & E_{2} \end{bmatrix}$ is unitary: 
\begin{align}
d_k =E_{1} d_{1,k}+E_{2} d_{2,k}, \ d_{1,k}=E_{1}^\top d_k, \ d_{2,k}=E_{2}^\top d_k.  \label{eq:d12}
\end{align}
 Finally, by defining $T_1\triangleq U^\top_1,T_2 \triangleq U^\top_2$, the output equation can be decoupled, by which system \eqref{eq:system} can be rewritten as: 
 \begin{align}
\label{eq:stateq}
\begin{array}{rl}x_{k+1}&=f(x_k)+Ww_k+G_1 
 d_{1,k}+G_2d_{2,k},\\
 z_{1,k}&= h_1(x_k) +  V_1v_{k}+\Xi d_{1,k}, 
 \\
  z_{2,k}&= \yo{h_2(x_k) + V_2v_{k}}, 
  \end{array}
  \end{align}
  \yo{where $h_i(x_k)=T_i h(x_k)$, $\forall i \in \{1,2\}$} 
  and $K_i \triangleq T_iK_i, \forall K \in \{V,G\},\forall i \in \{1,2\}$. 
 
Moreover, \yongs{we assume the following, which is satisfied for a broad range of nonlinear \yongn{functions \cite{yang2019tight}}: }

\vspace{-0.1cm}
\begin{assumption}\label{assumption:mix-lip}
Functions $f,h$ have bounded Jacobians over the state space $\mathcal{X}$ with known/computable Jacobian bounds.
\begin{assumption}\label{ass:rank}
\yo{The JSS decomposition of $h_2(x_k)$ via Proposition \ref{prop:JSS_decomp} given by $h_2(x_k)=C_2 x_k+\psi_2(x_k)$ is such that $\psi_2$ is JSS and further,}
$C_2G_2$ has full column rank\footnote{\mk{In the special case that $G=0$, we would require $G_2$ to be empty (and this does happen when $H$ has full rank), in which case $C_2G_2$ being full rank is satisfied by assumption.}}. Consequently, there exists $M_2 \triangleq (C_2G_2)^\dagger$ such that $M_2C_2G_2=I$. 
\end{assumption}
\begin{assumption} \label{ass:abs_inv}
\yo{(Only needed \mk{when the observations are nonlinear, i.e.,}} if $\psi_2(x_k)\neq 0$) \mk{The entire state space $\mathcal{X} \subset \mathbb{R}^n$ is bounded. Moreover,} $A_g$ is invertible, where $A_g \in \mathbb{R}^{n \times n}$ is the parallel affine outer-approximation slope (cf. Proposition \ref{prop:affine abstractions} and Corollary \ref{cor:parallel_abst}) of the function $g(x) \triangleq x+ 
\yo{G_2}M_2\psi_2(x)$ over the entire state space. 
\end{assumption}
\end{assumption} \vspace{-0.1cm}
Further, we formally define the notions of \emph{framers}, \emph{correctness} and \emph{stability} that are used throughout the paper. 
\begin{defn}[Interval {Framers}]\label{defn:framers}
Given the nonlinear plant \eqref{eq:system} (equivalently \eqref{eq:stateq}), 
the sequences $\{\overline{x}_k,\underline{x}_k\}_{k=0}^{\infty}\subset \mathbb{R}^n$ and $\{\overline{d}_k,\underline{d}_k\}_{k=0}^{\infty}\subset \mathbb{R}^p$ are called upper and lower framers for the states and inputs of the system in \eqref{eq:system}, respectively, if 
\begin{align*}
\forall k \in \yong{\mathbb{N}}, {\forall w_k \in \mathcal{W},\forall v_k \in \mathcal{V}}, \ \underline{\nu}_k \leq \nu_k \leq \overline{\nu}_k, \forall \nu \in \{x,d\}.
\end{align*}
In other words, starting from the initial interval $\underline{x}_0 \leq x_0 \leq \overline{x}_0$, the true state of the system in \eqref{eq:system}, $x_k$, and the unknown input $d_k$ are guaranteed to evolve within the interval flow-pipe $[\underline{x}_k,\overline{x}_k]$ and bounded within the interval $[\underline{d}_k,\overline{d}_k]$, for all $(k,{w_k,v_k)} \in {\mathbb{N} \times \mathcal{W} \times \mathcal{V}}$, respectively. Finally, 
any dynamical system (i.e., tractable algorithm) that returns upper and lower framers for the states and unknown inputs of system  \ref{eq:system} is called a resilient interval {framer} for \eqref{eq:system}. 
\end{defn}
\begin{defn}[{Framer} Error]\label{defn:error}
Given 
state and input framers $\{\underline{x}_k \leq \overline{x}_k\}_{k=0}^{\infty}$ and $\{\underline{d}_k \leq \overline{d}_k\}_{k=1}^{\infty}$, the sequences $\{e^x_k \triangleq \overline{x}_k-\underline{x}_k\}_{k=0}^{\infty}$ and $\{e^d_k \triangleq \overline{d}_k-\underline{d}_k\}_{k=1}^{\infty}$ are called
 the state and input {framer} errors, respectively. It 
 easily \mk{follows from}
Definition \ref{defn:framers} that 
$e^{\nu}_k \geq 0, \forall k \in {\mathbb{N}}, \forall \nu \in \{x,d\}.$  
\end{defn}
\begin{defn}[{Input-to-State} Stability and {Interval Observer}]\label{defn:stability}
An interval {framer} is {input-to-state} stable {(ISS)}, if the {framer} state error (cf. Definition \ref{defn:error}) {is bounded as follows: 
\begin{align}
\forall k \in \mathbb{N}, \ \|e^x_k\|_{2} \leq \beta(\|e^x_0\|_{2},k)+\alpha(\|\delta\|_{\ell_{\infty}}),
\end{align} 
where {$\delta \triangleq [(\delta^w)^\top \ (\delta^v)^\top]^\top \triangleq [(\overline{w}-\underline{w})^\top \ (\overline{v}-\underline{v})^\top]^\top$},
$\beta$ and $\alpha$ are functions of classes\footnote{{A function $\alpha: \mathbb{R}_{\geq 0} \to \mathbb{R}_{\geq 0}$ is of class $\mathcal{K}$ if it is continuous, positive definite, and strictly increasing and is of class $\mathcal{K}_{\infty}$ if it is also unbounded. Moreover, $\lambda : \mathbb{R}_{\geq 0} \to \mathbb{R}_{\geq 0}$ is of class $\mathcal{KL}$ if for each fixed $t\geq 0$, $\lambda(\cdot,t)$ is of class $\mathcal{K}$ and for each fixed $s \geq 0$, $\lambda(s,t)$ decreases to zero as $t \to \infty.$}} $\mathcal{KL}$ and $\mathcal{K}_{\infty}$, respectively, {and $\|\delta\|_{\ell_{\infty}} \triangleq \sup_{k \in \sy{\mathbb{N}}}
\|\delta_k\|_2=\|\delta\|_2$ is the $\ell_{\infty}$ signal norm}.}
{An ISS \sy{resilient} interval framer is called \sy{a resilient} interval observer.}
\end{defn}
\begin{defn}[$\mathcal{H}_{\infty}$-Optimal Resilient Interval Observer]\label{defn:H_inf}
A resilient interval framer design 
is $\mathcal{H}_{\infty}$-optimal 
if the \yo{$\mathcal{H}_{\infty}$} gain of the framer error system $\tilde{\mathcal{G}}$, i.e., $\|\tilde{\mathcal{G}}\|_{\yo{\mathcal{H}_{\infty}}}$ is minimized, where 
$\|\tilde{\mathcal{G}}\|_{\yo{\mathcal{H}_{\infty}}} \triangleq \sup {\{}\frac{\|e^x\|_{\ell_2}}{\|\delta\|_{\ell_2}},\delta \ne 0{\}}$, {and} 
$\|s\|_{\ell_2} \triangleq \sqrt{\sum_{0}^{\infty}\|s_k\|_2^2}$ is the $\ell_2$ signal norm for $s \in \{e^x,\delta\}$.
\end{defn}
\mk{Using the above, we aim to address the following problem.}
\begin{problem}\label{prob:SISIO}
Given the nonlinear system in \eqref{eq:system}, as well as Assumptions {\ref{assumption:mix-lip}--\ref{ass:abs_inv}}, 
synthesize {an ISS and {$\mathcal{H}_{\infty}$-optimal} resilient interval observer (cf. Definitions \ref{defn:framers}--\ref{defn:H_inf})}.
\end{problem}

\section{Resilient Interval Observer Design} \label{sec:observer}
In this section, we describe 
the proposed resilient interval observer as well as analyze its correctness and ISS properties.
\subsection{Interval Framer Design} \label{sec:obsv}
Our strategy for designing resilient interval observers in the presence of unknown inputs has three steps. First, we obtain an equivalent representation of the system in \eqref{eq:system} by introducing some auxiliary state variables, such that the equivalent system is not affected by the attack signal. Then, inspired by our previous work on synthesizing interval observers for nonlinear systems \cite{moh2022intervalACC,khajenejad2022h} we will design embedding systems (cf. Definition \ref{def:embedding}) for the equivalent system representation, which returns state framers. Finally, we obtain input framers \yo{(with a one-step delay \sy{since $d_{2,k}$ does not appear in the measurements $z_{1,k}$ and $z_{2,k}$ in \eqref{eq:stateq}})} as functions of the computed state framers.

\yo{First,} note that from \eqref{eq:stateq} 
and with $S \triangleq \Xi^{-1}$, $d_{1,k}$ can be computed as a function of the state at current time as follows:
\begin{align}\label{eq:d1}
d_{1,k}=S(z_{1,k}-h_1(x_k)-V_1v_k).
\end{align}
Next, we introduce an auxiliary state variable as:
\begin{align}\label{eq:aux}
\xi_k \triangleq\hspace{-.05cm} x_k\hspace{-.05cm}-\hspace{-.05cm}N(z_{2,k}-V_2v_k-\psi_2(x_k))=\hspace{-.cm}(I\hspace{-0.05cm}-\hspace{-0.05cm}NC_2)x_k, 
\end{align}
where \mk{the equality follows from \eqref{eq:stateq} and Assumption \ref{ass:rank}. Moreover,} $N \in \mathbb{R}^{n \times (l-\tilde{p})}$ is a to-be-designed gain to cancel out the effect of the unknown input in the state equation. This is done through the following lemma.\vspace{-0.1cm}
\begin{lem}
Suppose Assumption \ref{ass:rank} holds and let \mk{$N=G_2M_2=G_2(C_2G_2)^\dagger$} and $S \triangleq \Xi^{-1}$. Then,  the value of the auxiliary state $\xi_k$ at \sy{time step $k+1$} 
can be computed as:

\vspace{-.3cm}
{\small
\begin{align}\label{eq:sys_equivalent_3}
\hspace{-.2cm}\xi_{k+1}&\hspace{-.1cm}=\hspace{-.1cm}(I\hspace{-.1cm}-\hspace{-.1cm}NC_2)(f(x_k)\hspace{-.1cm}+\hspace{-.1cm}G_1S(z_{1,k}\hspace{-.1cm}-\hspace{-.1cm}h_1(x_k)\hspace{-.1cm}-\hspace{-.1cm}V_1v_k)\hspace{-.1cm}+\hspace{-.1cm}Ww_k).
\end{align}}\vspace{-0.45cm}
\end{lem}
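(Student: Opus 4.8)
The plan is to derive \eqref{eq:sys_equivalent_3} by a one-step index shift applied to the closed form of the auxiliary state together with a rank-based cancellation of the unmeasured input component. First, I would justify the second equality already asserted in \eqref{eq:aux}, namely $\xi_k = (I - NC_2)x_k$. By Assumption \ref{ass:rank} (equivalently, by applying Proposition \ref{prop:JSS_decomp} to $h_2$), we have $h_2(x_k) = C_2 x_k + \psi_2(x_k)$, so the third block of \eqref{eq:stateq} gives $z_{2,k} = C_2 x_k + \psi_2(x_k) + V_2 v_k$, hence $z_{2,k} - V_2 v_k - \psi_2(x_k) = C_2 x_k$. Substituting this into the definition $\xi_k \triangleq x_k - N(z_{2,k} - V_2 v_k - \psi_2(x_k))$ yields $\xi_k = (I - NC_2)x_k$ regardless of the (unknown) values of $v_k$ and $d_k$.

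Next, I would evaluate this relation at time $k+1$ and plug in the plant dynamics. From $\xi_{k+1} = (I - NC_2)x_{k+1}$ and the first block of \eqref{eq:stateq}, $x_{k+1} = f(x_k) + W w_k + G_1 d_{1,k} + G_2 d_{2,k}$, so that
\begin{align*}
\xi_{k+1} = (I - NC_2)\bigl(f(x_k) + W w_k + G_1 d_{1,k}\bigr) + (I - NC_2)G_2 d_{2,k}.
\end{align*}
The crux is that the prescribed gain $N = G_2 M_2 = G_2(C_2 G_2)^\dagger$ annihilates the last term: since Assumption \ref{ass:rank} guarantees $C_2 G_2$ has full column rank, $M_2 \triangleq (C_2G_2)^\dagger$ satisfies $M_2 C_2 G_2 = I$, and therefore $(I - NC_2)G_2 = G_2 - G_2 M_2 C_2 G_2 = G_2 - G_2 = 0$. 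This removes the dependence on the unmeasured input component $d_{2,k}$ entirely. Finally, I would substitute the explicit expression for the measured input component from \eqref{eq:d1}, $d_{1,k} = S(z_{1,k} - h_1(x_k) - V_1 v_k)$ with $S = \Xi^{-1}$, into the remaining term $(I - NC_2)G_1 d_{1,k}$, which delivers exactly \eqref{eq:sys_equivalent_3}.

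I do not expect a genuine obstacle here: the argument is a chain of algebraic substitutions, and the only point requiring care is the well-posedness of $M_2$, which is precisely the full-column-rank hypothesis in Assumption \ref{ass:rank} (and the footnote there already covers the degenerate case $G = 0$). It is worth emphasizing in the write-up that this lemma uses only Assumptions \ref{assumption:mix-lip} and \ref{ass:rank}; Assumption \ref{ass:abs_inv} is not needed, and the identity holds for every admissible noise realization, which is what later makes the construction suitable for building interval framers.
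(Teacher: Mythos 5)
Your proposal is correct and follows essentially the same route as the paper's proof: substitute $d_{1,k}$ from \eqref{eq:d1} into the state equation, use the identity $\xi_{k+1}=(I-NC_2)x_{k+1}$ from \eqref{eq:aux}, and cancel the $G_2 d_{2,k}$ term via $(I-NC_2)G_2 = G_2 - G_2M_2C_2G_2 = 0$, which holds because $M_2C_2G_2=I$ under Assumption \ref{ass:rank}. The only difference is that you spell out the intermediate justification of $\xi_k=(I-NC_2)x_k$ and perform the substitution of $d_{1,k}$ last rather than first, which is merely a reordering of the same algebraic steps.
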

\begin{proof}
By plugging $d_{1,k}$ from \eqref{eq:d1} into \eqref{eq:stateq}, we obtain

\vspace{-.3cm}
{\small
\begin{align}\label{eq:x_d_2}
x_{k+1}\hspace{-.1cm}=\hspace{-.1cm}f(x_k)\hspace{-.1cm}+\hspace{-.1cm}G_1S(z_{1,k}\hspace{-.07cm}-\hspace{-0.07cm}h_1(x_k)\hspace{-0.07cm}-\hspace{-.07cm}V_1v_k)\hspace{-.1cm}+\hspace{-.1cm}Ww_k\hspace{-.1cm}+\hspace{-.1cm}G_2d_{2,k}.
\end{align}
}\vspace{-0.4cm}

\noindent This, together with the second equality in \eqref{eq:aux} and \sy{the above choice of $N$ such that} 
$(I\hspace{-.1cm}-\hspace{-.1cm}NC_2)G_2=0$, 
returns \eqref{eq:sys_equivalent_3}.
\end{proof}\vspace{-0.2cm}

The evolution of the auxiliary state $\xi_k$ in \eqref{eq:sys_equivalent_3} is independent of the unknown input and hence, we can compute propagated framers for $\xi_k$ leveraging embedding systems \yo{(cf. Proposition \ref{def:embedding})}. However, we do not have a way of \yo{directly} retrieving the propagated framers for the original states, i.e., $\{\underline{x}_k,\overline{x}_k\}$ in terms of $\{\underline{\xi}_k,\overline{\xi}_k\}$ \yo{from the second equality of \eqref{eq:aux}, since $I-NC_2=I-G_2(C_2 G_2)^\dagger C_2$ can be shown to be not invertible}. To overcome this difficulty, given Assumption \ref{ass:abs_inv}, 
we introduce a new auxiliary state:
\begin{align}\label{eq:gamma}
\gamma_k \triangleq x_k-\Lambda(N(z_{2,k}-V_2v_k)-\epsilon_k),
\end{align}
{\color{black}with} $\Lambda \triangleq A_g^{-1}$, {\color{black}where} $A_g$ and $\epsilon_k \in [\mk{\underline{\epsilon}},\overline{\epsilon}]$  are parallel affine outer-approximation slope and approximation error of the mapping $g(x) \triangleq x+ 
\yo{G_2}M_2\psi_2(x)$ on the entire space $\mathcal{X}$ (cf. Proposition \ref{prop:affine abstractions}, Corollary \ref{cor:parallel_abst} and Assumption \ref{ass:abs_inv}).\vspace{-0.1cm}
\begin{prop}\label{prop:two_aux}
Given Assumption \ref{ass:abs_inv}, the two auxiliary states $\gamma_k$ and $\xi_k$ are linearly related as: 
\begin{align}\label{eq:two_aux}
\gamma_k=\Lambda \xi_k.
\end{align}
\end{prop}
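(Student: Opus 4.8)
The plan is to show that the two auxiliary-state definitions in \eqref{eq:aux} and \eqref{eq:gamma} differ only by the invertible factor $\Lambda$, by expanding both in terms of $x_k$ using Assumption~\ref{ass:rank} and the defining property of the parallel affine outer-approximation. First I would recall from \eqref{eq:aux} that $\xi_k = (I - NC_2)x_k$ with $N = G_2 M_2$, so $NC_2 = G_2 M_2 C_2$, and likewise rewrite $\gamma_k$ using $z_{2,k} - V_2 v_k = h_2(x_k) = C_2 x_k + \psi_2(x_k)$ from \eqref{eq:stateq}, so that $N(z_{2,k} - V_2 v_k) = G_2 M_2 C_2 x_k + G_2 M_2 \psi_2(x_k)$.

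Next I would substitute the parallel affine outer-approximation of $g(x) \triangleq x + G_2 M_2 \psi_2(x)$ from Corollary~\ref{cor:parallel_abst}, namely $g(x_k) = A_g x_k + \epsilon_k$ with $\epsilon_k \in [\underline{\epsilon},\overline{\epsilon}]$, which gives $G_2 M_2 \psi_2(x_k) = A_g x_k + \epsilon_k - x_k$. Plugging this into the expression for $\gamma_k$ in \eqref{eq:gamma}:
\begin{align*}
\gamma_k &= x_k - \Lambda\big(G_2 M_2 C_2 x_k + A_g x_k + \epsilon_k - x_k - \epsilon_k\big) \\
&= x_k - \Lambda\big(G_2 M_2 C_2 x_k + A_g x_k - x_k\big).
\end{align*}
Using $\Lambda A_g = I$ (from $\Lambda \triangleq A_g^{-1}$, which is well-defined by Assumption~\ref{ass:abs_inv}), the terms $-\Lambda A_g x_k + \Lambda x_k$ collapse, and I am left with $\gamma_k = x_k - \Lambda G_2 M_2 C_2 x_k - x_k + \Lambda x_k = \Lambda\big(x_k - G_2 M_2 C_2 x_k\big) = \Lambda (I - NC_2) x_k = \Lambda \xi_k$, which is exactly \eqref{eq:two_aux}.

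The computation itself is short and essentially bookkeeping; the one place to be careful is making sure the $\epsilon_k$ terms cancel exactly — this works because the same $\epsilon_k$ appears both inside $g(x_k)$ and as the additive correction in the definition \eqref{eq:gamma} of $\gamma_k$, which is clearly the deliberate design choice. The only genuine hypothesis being invoked is Assumption~\ref{ass:abs_inv}, which guarantees $A_g$ is invertible so that $\Lambda$ exists; I would note explicitly where it is used. I expect no real obstacle here — the main subtlety is simply keeping track of which $x_k$ terms come from the "$+x$" in $g$, from the leading $x_k$ in \eqref{eq:gamma}, and from the $C_2 x_k$ piece of $h_2$, and confirming they combine into $\Lambda(I - NC_2)x_k$ after applying $\Lambda A_g = I$.
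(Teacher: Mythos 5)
Your proof is correct and uses the same key ingredients as the paper's: the identity $z_{2,k}-V_2v_k=C_2x_k+\psi_2(x_k)$, the parallel affine outer-approximation $g(x_k)=A_gx_k+\epsilon_k$ with the \emph{same} $\epsilon_k$ as in \eqref{eq:gamma}, and invertibility of $A_g$ from Assumption~\ref{ass:abs_inv}; the only difference is bookkeeping direction (you expand $\gamma_k$ directly in terms of $x_k$ and recognize $\Lambda(I-NC_2)x_k=\Lambda\xi_k$, whereas the paper first solves $A_gx_k=\xi_k+N(z_{2,k}-V_2v_k)-\epsilon_k$ for $x_k$ and substitutes into \eqref{eq:gamma}). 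This is essentially the same argument, and your cancellation of the $\epsilon_k$ terms is exactly the cancellation the paper's substitution performs.
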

\begin{proof}
Computing parallel affine outer-approximation of the mapping 
\yo{$g(x_k)=A_g x_k+\epsilon_k$ and applying \eqref{eq:aux}, we obtain} 
\begin{align*}
g(x_k)&\triangleq \yo{x_k}+N\psi_2(x)=\xi_k+N(z_{2,k}-V_2v_k) \\ 
\yo{\Rightarrow} A_gx_k&=\xi_k+N(z_{2,k}-V_2v_k)-\epsilon_k, \quad \epsilon_k \in [\underline{\epsilon},\overline{\epsilon}],
\end{align*}
\yo{from} which \mo{and} given Assumption \ref{ass:abs_inv} (\yo{that $A_g$ is invertible, with $\Lambda=A_g^{-1}$), we have}  
\begin{align}\label{eq:x_xi}
x_k=\Lambda(\xi_k+N(z_{2,k}-V_2v_k)-\epsilon_k), \quad \epsilon_k \in [\underline{\epsilon},\overline{\epsilon}].
\end{align}
Plugging $x_k$ from \eqref{eq:x_xi} into \eqref{eq:gamma} returns the results.
\end{proof}
 \mo{We are \yo{now} ready to propose an input and state resilient interval framer, i.e., the following discrete-time dynamical system \eqref{eq:framers}--\eqref{eq:d_framers}, which by construction, {outputs}\yo{/returns} framers for the original states $\{x_k\}_{k=0}^{\infty}$ and the unknown input signal $\{d_k\}_{k=1}^{\infty}$ of system \eqref{eq:system}. The details of \yo{the framer construction/design} 
 will be provided in the proof of Theorem \ref{thm:state_framer}. The proposed resilient interval framer is as follows:}
\begin{align}\label{eq:framers}
\begin{array}{rlllll}
\hspace{-.2cm}\underline{\gamma}_{k+1}&\hspace{-.2cm}=\hspace{-.1cm}(A\hspace{-.1cm}-\hspace{-.1cm}LC_2)^\oplus \underline{\gamma}_k\hspace{-.1cm}-\hspace{-.1cm}(A\hspace{-.1cm}-\hspace{-.1cm}LC_2)^\ominus \overline{\gamma}_k\hspace{-.1cm}+\hspace{-.1cm}\rho_d(\underline{x}_k,\overline{x}_k)\\
&\hspace{-.2cm}+D^{\ominus} \underline{\epsilon}\hspace{-.1cm}-\hspace{-.1cm}D^\oplus \overline{\epsilon}\hspace{-.1cm}+\hspace{-.1cm}L^{\ominus} \psi_{2,d}(\underline{x}_k,\overline{x}_k)\hspace{-.1cm}-\hspace{-.1cm}L^{\oplus} \psi_{2,d}(\overline{x}_k,\underline{x}_k)\\
&\hspace{-.2cm}+\hat{V}^{\ominus} \underline{v}-\hat{V}^\oplus \overline{v}+\hat{W}^{\ominus} \underline{w}-\hat{W}^\ominus \overline{w}+\hat{z}_k,\\
\hspace{-.2cm}\overline{\gamma}_{k+1}&\hspace{-.2cm}=\hspace{-.1cm}(A\hspace{-.1cm}-\hspace{-.1cm}LC_2)^\oplus \overline{\gamma}_k\hspace{-.1cm}-\hspace{-.1cm}(A\hspace{-.1cm}-\hspace{-.1cm}LC_2)^\ominus \underline{\gamma}_k\hspace{-.1cm}+\hspace{-.1cm}\rho_d(\overline{x}_k,\underline{x}_k)\\
&\hspace{-.2cm}+D^{\ominus} \overline{\epsilon}\hspace{-.1cm}-\hspace{-.1cm}D^\oplus \underline{\epsilon}\hspace{-.1cm}+\hspace{-.1cm}L^{\ominus} \psi_{2,d}(\overline{x}_k,\underline{x}_k)\hspace{-.1cm}-\hspace{-.1cm}L^{\oplus} \psi_{2,d}(\underline{x}_k,\overline{x}_k)\\
&\hspace{-.2cm}+\hat{V}^{\ominus} \overline{v}-\hat{V}^\oplus \underline{v}+\hat{W}^{\oplus} \overline{w}-\hat{W}^\ominus \underline{w}\hspace{-.1cm}+\hspace{-.1cm}\hat{z}_k,
\end{array}
\end{align}

\vspace{-.15cm}
{\small
\begin{align}\label{eq:framers_2}
\begin{array}{rl}
\hspace{-.4cm}\underline{x}_{k}&\hspace{-.2cm}=\hspace{-.05cm}\underline{\gamma}_k\hspace{-.1cm}+\hspace{-.1cm}\Lambda Nz_{2,k}\hspace{-.1cm}+\hspace{-.1cm}\Lambda^{\ominus}\underline{\epsilon}\hspace{-.1cm}-\hspace{-.1cm}\Lambda^{\oplus}\overline{\epsilon}\hspace{-.1cm}+\hspace{-.1cm}(\Lambda NV_2)^{\ominus}\underline{v}\hspace{-.1cm}-\hspace{-.1cm}(\Lambda NV_2)^{\oplus}\overline{v},\\
\hspace{-.4cm}\overline{x}_k &\hspace{-.2cm}= \hspace{-.05cm}\overline{\gamma}_k\hspace{-.1cm}+\hspace{-.1cm}\Lambda Nz_{2,k}\hspace{-.1cm}+\hspace{-.1cm}\Lambda^{\ominus}\overline{\epsilon}\hspace{-.1cm}-\hspace{-.1cm}
\Lambda^{\oplus}\underline{\epsilon}\hspace{-.1cm}+\hspace{-.1cm}(\Lambda NV_2)^{\ominus}\overline{v}\hspace{-.1cm}-\hspace{-.1cm}(\Lambda NV_2)^{\oplus}\underline{v},
\end{array}
\end{align}
}
\vspace{-.15cm}
\begin{align}\label{eq:d_framers}
\begin{array}{rlll}
\hspace{-.2cm}\underline{d}_{k-1}&\hspace{-.2cm}=\Phi^\oplus\underline{x}_k-\Phi^\ominus\overline{x}_k+\mj{\kappa_d}(\underline{x}_{k-1},\overline{x}_{k-1})\hspace{-0.05cm}+\hspace{-0.05cm}A_zz_{1,k-1}\\
&\hspace{-.2cm}+A_v^\oplus\underline{v}-A_v^{\ominus}\overline{v}+\Phi^{\ominus}\underline{w}-\Phi^{\oplus}\overline{w},\\
\hspace{-.2cm}\overline{d}_{k-1}&\hspace{-.2cm}=\Phi^\oplus\overline{x}_k-\Phi^\ominus\underline{x}_k+\mj{\kappa_d}(\overline{x}_{k-1},\underline{x}_{k-1})\hspace{-0.05cm}+\hspace{-0.05cm}A_zz_{1,k-1}\\
&\hspace{-.2cm}+A_v^\oplus\overline{v}-A_v^{\ominus}\underline{v}+\Phi^{\ominus}\overline{w}-\Phi^{\oplus}\underline{w},
\end{array}
\end{align}
where $S \triangleq \Xi^{-1}$, $N=G_2M_2$ and $\Lambda \triangleq A^{-1}_g$. Furthermore, $L \in \mathbb{R}^{n \times (l-p_{H})}$ is an arbitrary matrix (observer gain) which will be designed later in Theorem \ref{thm:stability} to \sy{yield} 
stability and optimality of the proposed framers. Moreover, $A \in \mathbb{R}^{n \times n}$ and $\rho:\mathcal{X} \subset \mathbb{R}^n \to \mathbb{R}^n$ are obtained by applying JSS decompositions (cf. Proposition \ref{prop:JSS_decomp}) on the mapping $ \tilde{f}(x) \triangleq \Lambda (I-NC_2)(f(x)-G_1Sh_1(x))$, while $\psi_{2,d}$ and $\rho_d$ are tight decomposition functions for the JSS mappings $\rho$ and $\psi_2$, respectively, computed through Proposition \ref{prop:JSS_decomp}. Further, 

\vspace{-.3cm}
{\small
\begin{align}\label{eq:D}
\begin{array}{rlll}
\hspace{-.2cm}\hat{V} &\hspace{-.2cm}\triangleq 
(A-LC_2)\Lambda NV_2+LV_2+\Lambda(I-NC_2)G_1SV_1,  \\
\hspace{-.2cm}\Phi &\hspace{-.2cm}\triangleq E_2M_2C_2, A_v \triangleq (\Phi G_1-E_1)SV_1, A_z \triangleq (E_1\hspace{-.1cm}-\hspace{-.1cm}\Phi G_1)S,  \\
\hspace{-.2cm}D &\hspace{-.2cm}\triangleq (A-LC_2)\Lambda, \hat{W} \triangleq \Lambda(I-NC_2)W, \\ 
\hspace{-.2cm}\hat{z}_k &\hspace{-.2cm}\triangleq \Lambda(I-NC_2)G_1Sz_{1,k}+(L+(A-LC_2)\Lambda N)z_{2,k}, 
\end{array}
\end{align}
}
\yo{and} $\mj{\kappa_d}$ is the decomposition function of the mapping $\mj{\kappa}(x) \triangleq (\Phi G_1-E_1)Sh_1(x)-\Phi f(x)$, computed via \cite[Theorem 1]{yang2019sufficient}. 
Finally, $A_g$ and $\epsilon_k \in [\underline{\epsilon},\overline{\epsilon}]$ are computed 
\yo{via} Corollary \ref{cor:parallel_abst}.

The following theorem formalizes the state and input framer/correctness property of the proposed resilient interval observer \eqref{eq:framers}--\eqref{eq:d_framers} with respect to the original system \eqref{eq:system}.
\begin{thm}\label{thm:state_framer}
Suppose Assumptions \ref{assumption:mix-lip}--\ref{ass:abs_inv} hold. 
 Then, the sequences $\{\underline{x}_k,\overline{x}_k\}_{k=0}^{\infty}$ and $\{\underline{d}_k,\overline{x}_k\}_{k=1}^{\infty}$ obtained from the system \eqref{eq:framers}--\eqref{eq:d_framers}, construct framers for the states and unknown input signal of \eqref{eq:system}, respectively, i.e., $\underline{\nu}_k \leq \nu_k \leq \overline{\nu}_k, \forall \nu \in \{x,d\}, \forall k \in \mathbb{N},\forall w_k \in \mathcal{W},\forall v_k \in \mathcal{V}$.
\end{thm}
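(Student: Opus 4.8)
The plan is to recognize that the recursion \eqref{eq:framers} is precisely the embedding system (in the sense of Definition \ref{def:embedding}) associated with an \emph{exact} closed-form recursion for the auxiliary state $\gamma_k$ defined in \eqref{eq:gamma}, and then to prove the framer property by induction on $k$, carrying simultaneously the two hypotheses $\underline{\gamma}_k \le \gamma_k \le \overline{\gamma}_k$ and $\underline{x}_k \le x_k \le \overline{x}_k$ (both needed because, as will be seen, $\gamma_{k+1}$ evolves linearly in $\gamma_k$ but nonlinearly in $x_k$). The base case holds by construction of $\underline{\gamma}_0,\overline{\gamma}_0$: applying interval arithmetic to \eqref{eq:gamma} using $\underline{x}_0\le x_0\le\overline{x}_0$, $v_0\in\mathcal{V}$, $\epsilon_0\in[\underline{\epsilon},\overline{\epsilon}]$ and the known $z_{2,0}$ gives bounds on $\gamma_0$, and then \eqref{eq:framers_2} at $k=0$ recovers $\underline{x}_0\le x_0\le\overline{x}_0$. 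Throughout I will use the elementary bound $M\zeta \in [M^\oplus\underline{\zeta}-M^\ominus\overline{\zeta},\,M^\oplus\overline{\zeta}-M^\ominus\underline{\zeta}]$ valid whenever $\underline{\zeta}\le\zeta\le\overline{\zeta}$, together with the decomposition-function sandwiching $\mu_d(\underline{z},\overline{z})\le\mu(z)\le\mu_d(\overline{z},\underline{z})$ from Definition \ref{defn:dec_func} / Proposition \ref{prop:tight_decomp}.

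The heart of the argument is the derivation of the exact $\gamma$-recursion. Combining $\gamma_k=\Lambda\xi_k$ (Proposition \ref{prop:two_aux}) with the auxiliary dynamics \eqref{eq:sys_equivalent_3} and the JSS decomposition $\tilde{f}(x)=Ax+\rho(x)$ of $\tilde{f}(x)\triangleq\Lambda(I-NC_2)(f(x)-G_1Sh_1(x))$, one gets $\gamma_{k+1}=Ax_k+\rho(x_k)+\Lambda(I-NC_2)G_1Sz_{1,k}-\Lambda(I-NC_2)G_1SV_1v_k+\Lambda(I-NC_2)Ww_k$. Next I would inject the identically-zero term $L\bigl(z_{2,k}-C_2x_k-\psi_2(x_k)-V_2v_k\bigr)$, which vanishes by the second equation of \eqref{eq:stateq} and Assumption \ref{ass:rank}; this is the step that converts the unstabilized map into one containing the tunable block $(A-LC_2)$. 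Substituting $x_k=\gamma_k+\Lambda Nz_{2,k}-\Lambda NV_2v_k-\Lambda\epsilon_k$ from \eqref{eq:x_xi} into the resulting $(A-LC_2)x_k$ term and collecting, one obtains the compact exact recursion $\gamma_{k+1}=(A-LC_2)\gamma_k+\rho(x_k)-L\psi_2(x_k)-D\epsilon_k-\hat{V}v_k+\hat{W}w_k+\hat{z}_k$, with $D,\hat{V},\hat{W},\hat{z}_k$ exactly as in \eqref{eq:D}. Bounding this recursion term by term — the linear $(A-LC_2)\gamma_k$ and the terms in $\epsilon_k,v_k,w_k,z_k$ via the $(\cdot)^\oplus/(\cdot)^\ominus$ arithmetic; $\rho(x_k)$ via $\rho_d$; and $-L\psi_2(x_k)$ by splitting $L=L^\oplus-L^\ominus$ and using $\psi_{2,d}$ — reproduces exactly the right-hand sides of \eqref{eq:framers}, hence $\underline{\gamma}_{k+1}\le\gamma_{k+1}\le\overline{\gamma}_{k+1}$. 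The state framers then follow by applying the same interval arithmetic to the affine relation $x_k=\gamma_k+\Lambda Nz_{2,k}-\Lambda NV_2v_k-\Lambda\epsilon_k$, which is precisely \eqref{eq:framers_2}; this gives $\underline{x}_k\le x_k\le\overline{x}_k$ and closes the induction.

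For the input framers, I would first obtain an exact reconstruction of $d_k$ purely from the state trajectory. From $d_k=E_1d_{1,k}+E_2d_{2,k}$, \eqref{eq:d1}, the state equation in \eqref{eq:stateq}, and $M_2C_2G_2=I$ (Assumption \ref{ass:rank}) — which lets one solve $d_{2,k}=M_2C_2(x_{k+1}-f(x_k)-Ww_k-G_1d_{1,k})$ — eliminating $d_{1,k},d_{2,k}$ yields $d_k=\Phi x_{k+1}+\kappa(x_k)+A_zz_{1,k}+A_v v_k-\Phi W w_k$ with $\Phi,A_z,A_v$ as in \eqref{eq:D} and $\kappa(x)\triangleq(\Phi G_1-E_1)Sh_1(x)-\Phi f(x)$; re-indexing $k\mapsto k-1$ exhibits the one-step delay, since $d_{k-1}$ requires $x_k$. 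Bounding this expression using the already-established state framers at times $k-1$ and $k$, the decomposition function $\kappa_d$ of $\kappa$, and the $(\cdot)^\oplus/(\cdot)^\ominus$ arithmetic for the remaining linear and noise terms produces exactly \eqref{eq:d_framers}, hence $\underline{d}_{k-1}\le d_{k-1}\le\overline{d}_{k-1}$. The main obstacle is the exact algebraic bookkeeping in the $\gamma$-recursion: after the $L$-injection and the substitution of \eqref{eq:x_xi}, one must verify that every coefficient coalesces into the specific matrices $D,\hat{V},\hat{W}$ and signal $\hat{z}_k$ of \eqref{eq:D}; once that identity is in hand, the rest is a routine application of monotone/interval bounds, the only subtlety being that the $\gamma$- and $x$-framer hypotheses must be propagated jointly.
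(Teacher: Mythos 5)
Your proposal is correct and follows essentially the same route as the paper's proof: the zero-term injection $L(z_{2,k}-C_2x_k-\psi_2(x_k)-V_2v_k)$, the JSS decomposition $\tilde f(x)=Ax+\rho(x)$, the substitution of \eqref{eq:x_xi} to reach the exact recursion \eqref{eq:sys_equivalent_4} with the matrices of \eqref{eq:D}, interval/decomposition-function bounding to obtain \eqref{eq:framers}--\eqref{eq:framers_2}, and the same elimination of $d_{1,k},d_{2,k}$ via $M_2C_2G_2=I$ to reach \eqref{eq:d_x} and \eqref{eq:d_framers}. Your explicit joint induction on the $\gamma$- and $x$-framers is just a more spelled-out version of the paper's appeal to the embedding-system framer property, not a different argument.
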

\begin{proof}
\yo{From \eqref{eq:sys_equivalent_3} and  \eqref{eq:two_aux}, we obtain} 
\begin{align*}
\gamma^+_k\hspace{-.1cm}=\hspace{-.1cm}\Lambda(I\hspace{-.1cm}-\hspace{-.1cm}NC_2)(f(x_x)\hspace{-.1cm}+\hspace{-.1cm}G_1S(z_{1,k}\hspace{-.1cm}-\hspace{-.1cm}h_1(x_k)\hspace{-.1cm}-\hspace{-.1cm}V_1v_k)\hspace{-.1cm}+\hspace{-.1cm}Ww_k).
\end{align*}
Adding the \emph{zero term} $L(z_{2,k}-C_2x_k-\psi_2(x_k)-V_2v_k)$ to the right hand side of \mk{the above equation} and applying mixed-monotone decompositions \mo{on the mapping $ \tilde{f}(x) \triangleq \Lambda (I-NC_2)(f(x)-G_1Sh_1(x))$ to decompose it as $\tilde{f}(x)=Ax+\rho(x)$ (cf. Proposition \ref{prop:JSS_decomp} for more details),} yields:
\begin{align}\label{eq:gamma_dyn_2}
\hspace{-.2cm}\gamma_{k+1}\hspace{-.1cm}=\hspace{-.1cm}(A\hspace{-.1cm}-\hspace{-.1cm}LC_2)x_k\hspace{-.1cm}+\hspace{-.1cm}\rho(x_k)\hspace{-.1cm}-\hspace{-.1cm}L\psi_2(x_k)\hspace{-.1cm}-\hspace{-.1cm}\tilde{V}v_k\hspace{-.1cm}+\hspace{-.1cm}\tilde{W}w_k\hspace{-.1cm}+\hspace{-.1cm}\tilde{z}_k,
\end{align}
where $\tilde{V} \triangleq \Lambda (I-NC_2)G_1SV_1+LV_2$,
 $\tilde{W} \triangleq \Lambda(I-NC_2)W$ and
  $\tilde{z}_k \triangleq \Lambda(I-NC_2)G_1Sz_{1,k}+Lz_{2,k}$. Then, by computing $x_k$ in terms of $\gamma_k$ from \eqref{eq:gamma} and plugging it back into the linear terms in the right-hand side of \eqref{eq:gamma_dyn_2}, we obtain 
  \begin{align}\label{eq:sys_equivalent_4}\small
\begin{array}{rl}
\gamma_{k+1} 
&=(A-LC_2)\gamma_k+\rho(x_k)-L\psi_2(x_k)\\
&-\hat{V}v_k+\hat{W}w_t-D\epsilon_k+\hat{z}_k,
\end{array}\normalsize
\end{align}\vspace{-0.35cm}

 \noindent with $\hat{V},D,\hat{W}$ and $\hat{z}_k$ given in \eqref{eq:D}. Next, by applying Proposition \ref{prop:tight_decomp} and \cite[Lemma 1]{efimov2013interval}, we construct the embedding system \eqref{eq:framers} for \eqref{eq:sys_equivalent_4}, which implies $\underline{\gamma}_k \leq \gamma_k \leq\overline{\gamma}_k, \forall k \in \mathbb{N}$, by construction. Further, the results in \eqref{eq:framers_2} follow from applying \cite[Lemma 1]{efimov2013interval} on \eqref{eq:gamma} to compute framers of $x_k$ in terms of the framers of $\gamma_k$. 
 
 To obtain input framers, note that multiplying both sides of \eqref{eq:x_d_2} by $M_2C_2$ together with Assumption \ref{ass:rank} yields 
 $d_{2,k-1}=M_2C_2(x_k-f(x_{k-1})+G_1Sh_1(x_{k-1})+G_1S(V_1v_{k-1}-z_{1,k-1})-Ww_{k-1})$. This, along with \eqref{eq:d12} and \eqref{eq:d1}, \sy{leads to}

 \vspace{-.3cm}
 {\small
 \begin{align}\label{eq:d_x}
 d_{k-1}\hspace{-.0cm}=\hspace{-.cm}\Phi x_k\hspace{-.1cm}+\hspace{-.1cm}\mj{\kappa(}x_{k-1})\hspace{-.1cm}+\hspace{-.1cm}A_zz_{1,k-1}\hspace{-.1cm}+\hspace{-.1cm}A_vv_{k-1}\hspace{-.1cm} -\hspace{-.1cm}\Phi Ww_{k-1}. 
 \end{align}
 }\vspace{-0.4cm}
 
 \noindent\mo{The input framers in \eqref{eq:d_framers} \mk{are obtained by}} leveraging \cite[Theorem 1]{yang2019sufficient} to compute a decomposition function for the nonlinear function \sy{$\kappa$}, as well as \mo{applying} \cite[Lemma 1]{efimov2013interval} \mo{to bound the linear terms in the right-hand side of \eqref{eq:d_x}}. 
\end{proof}
\vspace*{-0.3cm}
\subsection{ISS and $\mathcal{H}_{\infty}$-Optimal Interval Observer Synthesis} \vspace*{-0.05cm} 
Next, we 
provide sufficient conditions 
to guarantee the
stability of the proposed 
{framers}, i.e.,  
we 
\yo{seek} to synthesize 
{the observer gain $L$ to ensure {input-to-state stability (ISS) of}} 
the observer state error, $e^x_k \triangleq \overline{x}_k-\underline{x}_k$ 
in the sense of Definition \ref{defn:error}, while ensuring that the design is optimal in the sense of \yo{minimizing the  $\mathcal{H}_{\infty}$ gain} (cf. Definition \ref{defn:stability}). 

First, we derive the observer error dynamics as follows. 
\vspace*{-0.2cm}

\begin{lem}\label{lem:error_dyn}
Consider {\color{black}the nonlinear system \eqref{eq:system}} and suppose all assumptions in Theorem \ref{thm:stability} hold. Then, the state framer error dynamics of the resilient interval observer \eqref{eq:framers}--\eqref{eq:d_framers} and its nonlinear comparison system are as follows: 
\begin{align}\label{eq:error_dyn}
\begin{array}{rll}
\hspace{-.4cm}e^{x}_{k+1}&\hspace{-.2cm}=|A-LC_2|e^x_k+\delta^{\rho}_k+|L|\delta^{\psi_2}_k+|\hat{W}|\delta^w\\
&\hspace{-.2cm}+(|V_a-LV_b|-|A-LC_2||\Lambda NV_2|+|\Lambda NV_2|)\delta^v \\
&\hspace{-.2cm}+(|\Lambda|+|D_a-LD_b|-|A-LC_2||\Lambda|)\delta^{\epsilon}\\
&\hspace{-.2cm}\leq (|A-LC_2|+\overline{F}_{\rho}+|L|\overline{F}_{\psi_2})e^x_k+|\hat{W}|\delta^w\\
&\hspace{-.2cm}+(|V_a-LV_b|-|A-LC_2||\Lambda NV_2|+|\Lambda NV_2|)\delta^v \\
&\hspace{-.2cm} +(|\Lambda|+|D_a-LD_b|-|A-LC_2||\Lambda|) \delta^\epsilon,
\end{array}
\end{align}
where $\delta^\zeta_{k} \triangleq \zeta_d(\overline{x}_k,\underline{x}_k)-\zeta_d(\underline{x}_k,\overline{x}_k), \forall \zeta \in \{\psi_2,\rho\}$, $ \delta^s \triangleq \overline{s}-\underline{s}, \forall s \in \{w,v,\epsilon\}$, and $\overline{F}_{\zeta}, \forall \zeta \in \{\psi_2,\rho\}$ are computed through Proposition \ref{prop:tight_decomp}. Moreover, 
\begin{align}
\begin{array}{rl}
V_a &\triangleq A\Lambda NV_2+\Lambda(I-NC_2)G_1SV_1,\\
V_b &\triangleq (C_2\Lambda N-I)V_2, D_a \triangleq A\Lambda, D_b \triangleq C_2\Lambda.
\end{array}
\end{align}\vspace{-.4cm}
\end{lem}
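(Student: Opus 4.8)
The plan is to subtract the lower framer recursions in \eqref{eq:framers}--\eqref{eq:framers_2} from the corresponding upper ones, repeatedly using the identities $M = M^{\oplus} - M^{\ominus}$ and $|M| = M^{\oplus} + M^{\ominus}$. First I would difference the two lines of \eqref{eq:framers_2} at time $k$: the $z_{2,k}$ terms cancel, $\Lambda^{\ominus}\overline{\epsilon} - \Lambda^{\oplus}\underline{\epsilon}$ minus $\Lambda^{\ominus}\underline{\epsilon} - \Lambda^{\oplus}\overline{\epsilon}$ collapses to $(\Lambda^{\oplus} + \Lambda^{\ominus})\delta^{\epsilon} = |\Lambda|\delta^{\epsilon}$, and the $v$-terms collapse to $|\Lambda N V_2|\delta^{v}$, yielding the algebraic relation $e^x_k = e^{\gamma}_k + |\Lambda|\delta^{\epsilon} + |\Lambda N V_2|\delta^{v}$ with $e^{\gamma}_k \triangleq \overline{\gamma}_k - \underline{\gamma}_k$.

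Next I would difference the two recursions in \eqref{eq:framers}: the $\hat{z}_k$ terms cancel; the $(A - LC_2)^{\oplus}/(A - LC_2)^{\ominus}$ split acting on $\gamma$ recombines into $|A - LC_2| e^{\gamma}_k$; the $D^{\oplus}/D^{\ominus}$, $\hat{V}^{\oplus}/\hat{V}^{\ominus}$ and $\hat{W}^{\oplus}/\hat{W}^{\ominus}$ splits recombine into $|D|\delta^{\epsilon}$, $|\hat{V}|\delta^{v}$ and $|\hat{W}|\delta^{w}$; the $L^{\oplus}$ and $L^{\ominus}$ contributions acting on the $\psi_{2,d}$ pair combine into $|L|\delta^{\psi_2}_k$; and the $\rho_d$ pair gives $\delta^{\rho}_k$. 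This produces $e^{\gamma}_{k+1} = |A - LC_2| e^{\gamma}_k + \delta^{\rho}_k + |L|\delta^{\psi_2}_k + |D|\delta^{\epsilon} + |\hat{V}|\delta^{v} + |\hat{W}|\delta^{w}$. Eliminating $e^{\gamma}$ by substituting $e^{\gamma}_k = e^x_k - |\Lambda|\delta^{\epsilon} - |\Lambda N V_2|\delta^{v}$ on the right and $e^x_{k+1} = e^{\gamma}_{k+1} + |\Lambda|\delta^{\epsilon} + |\Lambda N V_2|\delta^{v}$ on the left, then collecting, gives $e^x_{k+1} = |A - LC_2| e^x_k + \delta^{\rho}_k + |L|\delta^{\psi_2}_k + |\hat{W}|\delta^{w} + (|\hat{V}| - |A - LC_2||\Lambda N V_2| + |\Lambda N V_2|)\delta^{v} + (|D| + |\Lambda| - |A - LC_2||\Lambda|)\delta^{\epsilon}$.

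It then remains to match coefficients with the claimed form via two purely algebraic identities from \eqref{eq:D}. Expanding $\hat{V} = (A - LC_2)\Lambda N V_2 + L V_2 + \Lambda(I - NC_2)G_1 S V_1$ and grouping the $L$-dependent part (so that the isolated $L V_2$ merges with $-L C_2 \Lambda N V_2$) yields $\hat{V} = [A\Lambda N V_2 + \Lambda(I - NC_2)G_1 S V_1] - L[(C_2 \Lambda N - I)V_2] = V_a - L V_b$; likewise $D = (A - LC_2)\Lambda = A\Lambda - L C_2\Lambda = D_a - L D_b$. Substituting these into the displayed recursion gives the equality in \eqref{eq:error_dyn}. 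Finally, for the comparison inequality I would invoke Proposition \ref{prop:tight_decomp}: since $\rho$ and $\psi_2$ are remainder terms of the JSS decompositions of $\tilde{f}$ and $h_2$, and $e^x_k = \overline{x}_k - \underline{x}_k \geq 0$, we have $\delta^{\rho}_k \leq \overline{F}_{\rho} e^x_k$ and $\delta^{\psi_2}_k \leq \overline{F}_{\psi_2} e^x_k$ componentwise; since $|L| \geq 0$, also $|L|\delta^{\psi_2}_k \leq |L|\overline{F}_{\psi_2} e^x_k$, so the equality upgrades to the stated bound with $e^x_k$-coefficient $|A - LC_2| + \overline{F}_{\rho} + |L|\overline{F}_{\psi_2}$, leaving the $\delta^w$, $\delta^v$, $\delta^{\epsilon}$ terms untouched.

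I expect the main obstacle to be purely organizational: keeping the $\oplus/\ominus$ bookkeeping consistent across the two coupled error variables $e^{\gamma}$ and $e^x$ at adjacent time indices, and correctly re-grouping the terms of $\hat{V}$ and $D$ against the definitions in \eqref{eq:D} --- in particular recognizing that the isolated $L V_2$ term must be absorbed into $-L V_b$. No analytical ingredient beyond Proposition \ref{prop:tight_decomp} is needed.
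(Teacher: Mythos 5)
Your proposal is correct and follows essentially the same route as the paper's proof: differencing \eqref{eq:framers} to get $e^{\gamma}_{k+1}=|A-LC_2|e^{\gamma}_k+\delta^{\rho}_k+|L|\delta^{\psi_2}_k+|\hat{V}|\delta^{v}+|\hat{W}|\delta^{w}+|D|\delta^{\epsilon}$, relating $e^x_k=e^{\gamma}_k+|\Lambda|\delta^{\epsilon}+|\Lambda NV_2|\delta^{v}$ via \eqref{eq:framers_2}, and then invoking Proposition \ref{prop:tight_decomp} for the inequality. The only difference is that you spell out the algebraic identities $\hat{V}=V_a-LV_b$ and $D=D_a-LD_b$, which the paper uses implicitly when stating the equality in \eqref{eq:error_dyn}.
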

\begin{proof}
It follows from \eqref{eq:framers} that \yo{the dynamics of $e^{\gamma}_k \triangleq \overline{\gamma}_k-\underline{\gamma}_k$ is given by} 
$e^{\gamma }_{k+1}=|A-LC_2| e^{\gamma}_k+\delta^\rho_k+|L| \delta^{\psi_{2}}_k+\hspace{-.1cm}|\hat{V}| \delta^{v}+|\hat{W}| \delta^w+|D|\delta^\epsilon$.
\mk{This, combined with} $e^x_k=e^{\gamma}_k+|\Lambda|\delta^\epsilon+|\Lambda NV_2|\delta^v$
\mk{(followed from \eqref{eq:framers_2})} 
 \yo{results in} 
the equality in \eqref{eq:error_dyn}, 
\mk{which together with} 
the facts that $\delta^\zeta_k \leq \overline{F}_\zeta e^x_k, \forall \zeta \in \{\rho,\psi_2\}$ (cf. Proposition \ref{prop:tight_decomp}), yields the inequality in \eqref{eq:error_dyn}.
\end{proof}
\vspace{-.15cm}
Further, by leveraging slightly different approaches to derive an upper \emph{linear} comparison system for the \emph{nonlinear} error comparison system \eqref{eq:error_dyn}, we derive different sets of sufficient conditions to guarantee the ISS property of the proposed observer, as well as to ensure the optimality of the design in the sense of \yo{minimum $\mathcal{H}_{\infty}$ gain}, \sy{as follows}. 
\begin{thm}[{ISS \& $\mathcal{H}_{\infty}$-Optimal Resilient Interval Observer Synthesis}]\label{thm:stability}
Consider 
system \eqref{eq:system} (equivalently the transformed system \eqref{eq:stateq}) 
and suppose Assumptions \ref{assumption:mix-lip}--\ref{ass:abs_inv} hold. Moreover, suppose there exist matrices $\mathbb{R}^{n \times n} \ni P^* \succ \mathbf{0}_{n,n}, {\Gamma^* \in \mathbb{R}^{n \times (l-p_{H})}_{\geq 0}}$ and {$\eta^* \in \mathbb{R}_{>0}$} such that $-P^* \in \mathbb{M}_n$ and the tuple $(P^*,\Gamma^*,\eta^*)$ solves the following problem:
\begin{align}\label{eq:sdp_DT}
\begin{array}{rl}
&\hspace{-.4cm}\min\limits_{\{\eta, P,\Gamma\}} \eta \\
&\hspace{-.4cm}s.t.  \begin{bmatrix} P & P\tilde{A}-\Gamma\tilde{C} & P\tilde{B} - \Gamma\tilde{D}  & 0 \\
  * & P & 0 & I \\
 * & * & \eta I & 0 \\
 * & * & * & \eta I \end{bmatrix}\hspace{-.1cm} \succ \hspace{-.1cm} 0,  (P,\Gamma) \in \mathbf{C},\\
\end{array}
\end{align}
where the matrices $\tilde{A},\tilde{B},\tilde{C},\tilde{D}$, as well as the corresponding additional set of constraints \yo{$\mathbf{C}$} 
can be either of the following:
\begin{enumerate}[(i)]
\item \yo{$\mathbf{C}\hspace{-0.05cm}=\hspace{-0.05cm}\{(P,\Gamma) \mid P \begin{bmatrix} {A} & V_a & D_a \end{bmatrix}\hspace{-0.05cm}-\hspace{-0.05cm}\Gamma \begin{bmatrix}{C}_2 & V_b & D_b \end{bmatrix} \geq 0\}$,}
if:\label{case:D_1}
\begin{align*}
\tilde{A}&= A+\overline{F}_{\rho}, \ \tilde{C}= C_2-\overline{F}_{\psi_2},\\
\tilde{B}&= \begin{bmatrix} V_a+(I-A)|\Lambda NV_2| & |\hat{W}| & D_a+(I-A)|\Lambda| \end{bmatrix},\\
\tilde{D}&= \begin{bmatrix} V_b-C_2|\Lambda NV_2| & 0 & D_b-C_2|\Lambda|  \end{bmatrix}.
\end{align*}
\item \yo{$\mathbf{C}=\{(P,\Gamma) \mid \Gamma \begin{bmatrix}{C}_2 & V_b & D_b \end{bmatrix} \geq 0\}$,} 
if \label{case:D_2}
\begin{align*}
\tilde{A} &= |A|+\overline{F}_{\rho}, \ \tilde{C}= -C_2-\overline{F}_{\psi_2},\\
\tilde{B}&= \begin{bmatrix} |V_a|\hspace{-.1cm}+\hspace{-.1cm}(I\hspace{-.1cm}-\hspace{-.1cm}|A|)|\Lambda NV_2| & |\hat{W}| & (I\hspace{-.1cm}-\hspace{-.1cm}|A|)|\Lambda|\hspace{-.1cm}+\hspace{-.1cm}|D_a| \end{bmatrix},\\
\tilde{D} &= \begin{bmatrix} C_2|\Lambda NV_2|-V_b & 0 & C_2|\Lambda|-D_b  \end{bmatrix}.
\end{align*}
\item \mo{$\mathbf{C}=\{\yo{(P,\Gamma) \mid\, } PA-\Gamma C_2 \geq 0$\}}, if: \label{case:D_3}
\begin{align*}
\hspace{-.6cm}\tilde{A} &= A+\overline{F}_{\rho}, \ \tilde{C}= C_2-\overline{F}_{\psi_2}, \ \tilde{D} = \begin{bmatrix} -V_2 & 0 & 0  \end{bmatrix},\\
\hspace{-.6cm}\tilde{B} &= \begin{bmatrix} |\Lambda(I\hspace{-.1cm}-\hspace{-.1cm}NC_2)G_1SV_1|\hspace{-.1cm}+\hspace{-.1cm}|\Lambda NV_2| & |\hat{W}| & |\Lambda| \end{bmatrix}.
\end{align*}
\end{enumerate}
Then, the proposed resilient interval framer \eqref{eq:framers}--\eqref{eq:d_framers} with the corresponding gain $L=({P}^*)^{-1}\Gamma^*$, is a resilient ISS input and state interval observer in the sense of Definition \ref{defn:stability} and also is $\mathcal{H}_{\infty}$-optimal (cf. Definition \ref{defn:H_inf}). \mk{Finally, in any of the above cases, the LMI in \eqref{eq:sdp_DT} is feasible only if 
 the linear comparison system $(\tilde{A},\tilde{B},\tilde{C},\tilde{D})$ is \sy{detectable.}}
\end{thm}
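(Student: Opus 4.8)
The plan is to recognize the semidefinite program~\eqref{eq:sdp_DT} as a Schur-complemented discrete-time bounded-real lemma for a positive linear \emph{comparison} system that dominates the nonlinear framer-error recursion of Lemma~\ref{lem:error_dyn}, and then to transfer Schur stability and the $\mathcal{H}_{\infty}$ gain from that comparison system back to the true error via a monotonicity argument. First I would extract two structural facts from the hypotheses on $(P^*,\Gamma^*)$. Since $P^*\succ 0$ and $-P^*\in\mathbb{M}_n$, the matrix $P^*$ has nonpositive off-diagonal entries and is positive definite, hence it is a symmetric M-matrix and $(P^*)^{-1}\geq 0$ entrywise; together with $\Gamma^*\geq 0$ this gives $L=(P^*)^{-1}\Gamma^*\geq 0$, so $|L|=L$. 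Second, left-multiplying each inequality defining $\mathbf{C}$ by the entrywise-nonnegative matrix $(P^*)^{-1}$ turns it into the corresponding sign condition on $L$: $A-LC_2\geq 0$ in case~\eqref{case:D_3}; additionally $V_a-LV_b\geq 0$ and $D_a-LD_b\geq 0$ in case~\eqref{case:D_1}; and $LC_2\geq 0$, $LV_b\geq 0$, $LD_b\geq 0$ in case~\eqref{case:D_2}.

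Next, writing $\Gamma^*=P^*L$, the off-diagonal blocks of~\eqref{eq:sdp_DT} become $P^*\tilde A-\Gamma^*\tilde C=P^*\widehat A$ and $P^*\tilde B-\Gamma^*\tilde D=P^*\widehat B$ with $\widehat A\triangleq\tilde A-L\tilde C$ and $\widehat B\triangleq\tilde B-L\tilde D$, so~\eqref{eq:sdp_DT} is exactly the (Schur-complemented) discrete-time bounded-real-lemma LMI for the linear system $\widehat{\mathcal{G}}\triangleq(\widehat A,\widehat B,I,0)$ at performance level $\eta$. Feasibility at $(P^*,\Gamma^*,\eta^*)$ therefore yields both a Lyapunov inequality $\widehat A^\top P^*\widehat A\prec P^*$, so that $\widehat A$ is Schur stable, and the $\ell_2$-gain bound $\|\widehat{\mathcal{G}}\|_{\mathcal{H}_{\infty}}<\eta^*$. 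Schur stability of $\widehat A$ already proves the ``only if'' clause, since then $L=(P^*)^{-1}\Gamma^*$ makes $\tilde A-L\tilde C$ Schur, so the pair $(\tilde A,\tilde C)$---equivalently the comparison system $(\tilde A,\tilde B,\tilde C,\tilde D)$---is detectable.

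It then remains to connect $\widehat{\mathcal{G}}$ with the error dynamics~\eqref{eq:error_dyn}. Substituting~\eqref{eq:D} and the sign conditions above, I would check in each case that $\widehat A$ and $\widehat B$ are entrywise nonnegative and dominate, componentwise, the coefficient matrices of the linear comparison inequality in~\eqref{eq:error_dyn} (note $\overline F_{\rho},\overline F_{\psi_2}\geq 0$ by Proposition~\ref{prop:tight_decomp}): in case~\eqref{case:D_1} the sign conditions remove every absolute value in~\eqref{eq:error_dyn} and the expressions match exactly, i.e.\ $\widehat A=|A-LC_2|+\overline F_{\rho}+|L|\overline F_{\psi_2}$ and $\widehat B$ equals the stacked disturbance-coefficient matrix; in cases~\eqref{case:D_2}--\eqref{case:D_3} the same bookkeeping together with $|A-LC_2|\leq|A|+L|C_2|$ (resp.\ $|A-LC_2|=A-LC_2$) and the triangle inequality applied to $|V_a-LV_b|$, $|D_a-LD_b|$ and $|\hat W|$ gives the domination with slack. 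Since $e^x_k\geq 0$ by Definition~\ref{defn:error} and $v\mapsto\widehat A v$ is monotone on the nonnegative orthant, a short induction yields $0\leq e^x_k\leq\tilde e_k$ for all $k$, where $\tilde e_{k+1}=\widehat A\tilde e_k+\widehat B\bar\delta$, $\tilde e_0=e^x_0$ and $\bar\delta\triangleq[\,(\delta^v)^\top\ \ (\delta^w)^\top\ \ (\delta^\epsilon)^\top\,]^\top\geq 0$; here $\delta^\epsilon=\overline\epsilon-\underline\epsilon$ is a fixed finite vector because $\mathcal{X}$ is bounded (Assumption~\ref{ass:abs_inv}) and it vanishes when $\psi_2\equiv 0$. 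Schur stability of $\widehat A$ then gives $\|\tilde e_k\|_2\leq\beta(\|e^x_0\|_2,k)+\alpha(\|\bar\delta\|_{\ell_{\infty}})$ with $\beta\in\mathcal{KL}$ and $\alpha\in\mathcal{K}_{\infty}$ (linear), hence the same bound for $\|e^x_k\|_2$; this is the ISS property of Definition~\ref{defn:stability}, and combined with Theorem~\ref{thm:state_framer} it shows~\eqref{eq:framers}--\eqref{eq:d_framers} is a resilient interval observer. Finally, $e^x_k\leq\tilde e_k$ gives $\|\tilde{\mathcal{G}}\|_{\mathcal{H}_{\infty}}\leq\|\widehat{\mathcal{G}}\|_{\mathcal{H}_{\infty}}$, and minimizing $\eta$ in~\eqref{eq:sdp_DT} minimizes this certified gain, establishing $\mathcal{H}_{\infty}$-optimality (Definition~\ref{defn:H_inf}).

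The step I expect to be the main obstacle is the case-by-case verification in the last paragraph: showing that the LMI data $(\tilde A,\tilde B,\tilde C,\tilde D)$ paired with the matching constraint set $\mathbf{C}$ really yield a \emph{nonnegative} linear system whose matrices dominate those of~\eqref{eq:error_dyn} entrywise. This hinges on eliminating or over-bounding each of $|A-LC_2|$, $|L|$, $|V_a-LV_b|$, $|D_a-LD_b|$ and $|\hat W|$ using precisely the sign information supplied by $\mathbf{C}$ and $L\geq 0$, and on tracking the auxiliary quantities $\Lambda NV_2$, $\Lambda$ and $\hat W$ together with the constant $\delta^\epsilon$-offset correctly through all three relaxations; the Lyapunov/bounded-real manipulations and the comparison induction are otherwise standard.
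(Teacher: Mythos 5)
Your proposal is correct and follows essentially the same route as the paper's proof: the M-matrix argument ($P^*\succ 0$ with $-P^*$ Metzler $\Rightarrow (P^*)^{-1}\geq 0$, hence $L=(P^*)^{-1}\Gamma^*\geq 0$ and the sign conditions induced by $\mathbf{C}$), the case-by-case reduction of the nonlinear error dynamics \eqref{eq:error_dyn} to a linear comparison system $e^x_{k+1}\leq(\tilde{A}-L\tilde{C})e^x_k+(\tilde{B}-L\tilde{D})\tilde{w}$, and the reading of \eqref{eq:sdp_DT} as the discrete-time $\mathcal{H}_{\infty}$ synthesis LMI for that comparison system, with detectability of $(\tilde{A},\tilde{C})$ as the necessary condition. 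The only minor difference is that you conclude ISS directly from Schur stability of $\tilde{A}-L\tilde{C}$ plus a monotone comparison induction, whereas the paper cites the asymptotic-gain plus 0-GAS characterization of ISS and prior LMI results; the substance is the same.
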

\begin{proof}
We will show that in each of the cases \eqref{case:D_1}--\eqref{case:D_3}, given the corresponding constraint set \mo{$\mathbf{C}$}, a linear comparison system for the observer state error dynamics \eqref{eq:error_dyn} can be computed in the following form:
\begin{align}\label{eq:comparison_error}
e^x_{k+1}\leq (\tilde{A}-L\tilde{C})e^x_k+(\tilde{B}-L\tilde{D})\tilde{w},
\end{align}
\sy{with $\tilde{w} \triangleq \begin{bmatrix} \delta^{v\top} & \delta^{w\top} & \delta^{\epsilon\top} \end{bmatrix}^\top$, where the detectability of the pair ($\tilde{A},\tilde{C}$) is a necessary condition for stabilizing the comparison system}. If this can be shown, then using the results in \cite[Section 9.2.3]{duan2013lmis}, the solution $(P^*,\Gamma^*)$ to the program in \eqref{eq:sdp_DT} returns the optimal observer gain $L^*=(P^*)^{-1}\Gamma^{*}$ for the linear comparison system \eqref{eq:comparison_error}, and hence, for the original error dynamics \eqref{eq:error_dyn} in the \yo{minimum $\mathcal{H}_{\infty}$ gain} sense 
 with {an $\mathcal{H}_\infty$ gain of} $\eta^*$ {(cf. Definition \ref{defn:H_inf})}. This implies that the above linear comparison system \eqref{eq:comparison_error} satisfies the following asymptotic gain (AG) property \cite{sontag1996new}:
\begin{align*}\small
\limsup_{k \to \infty} \|e^x_k\|_{2} \hspace{-.1cm} \leq \hspace{-.1cm} \alpha(\| {\tilde{\delta}}\|_{\ell_{\infty}}), \, \forall e^x_0,\sy{\forall 0\hspace{-.1cm}\le\hspace{-.1cm} {\tilde{\delta}\hspace{-.1cm}\le\hspace{-.1cm}[(\delta^{w})\hspace{-.07cm}^\top \, (\delta^{v})\hspace{-.07cm}^\top \, (\delta^{\epsilon})\hspace{-.07cm}^\top]^\top},} 
\normalsize
\end{align*}
where $\tilde{\delta}$ is any realization of the augmented noise and outer-approximation error {interval} width and $\alpha$ is any class $\mathcal{K}_{\infty}$ function that is lower bounded by $\eta^*\tilde{\delta}$. On the other hand, by setting $\delta=0$, the LMIs in \eqref{eq:sdp_DT} 
{reduce} to their noiseless counterparts in {\cite[Eq. (19)]{moh2022intervalACC}}. Hence, by {\cite[Theorem 2]{moh2022intervalACC}}, the comparison system \eqref{eq:comparison_error} is 0-stable (0-GAS), which in addition to the AG property \mk{above} is equivalent to the ISS property for \eqref{eq:comparison_error} by \cite[Theorem 1-e]{sontag1996new}. Thus, the designed observer is also ISS. So, what remains to complete the proof is to show that the comparison system \eqref{eq:comparison_error} can indeed be computed in each of the cases as follows.

\textbf{Case} \eqref{case:D_1}. Consider the nonlinear comparison system in \eqref{eq:error_dyn}. By satisfying the constraint set $\mathbf{C}$, we enforce {$-P$} to be Metzler, 
  as well as $P\tilde{A}-\Gamma \tilde{C},PV_a-\Gamma V_b$ and $PV_a-\Gamma V_b$ to be non-negative. Also, $\Gamma$ is non-negative by assumption. 
  Consequently, {since {$P$} is positive definite,} 
  it becomes a non-singular M-matrix, {i.e., a square matrix whose negation is Metzler and whose eigenvalues have non-negative real parts,} and hence is inverse-positive \cite[Theorem 1]{plemmons1977m}, i.e., {$P^{-1} \geq 0$}. Therefore, {$L=P^{-1}\Gamma \geq 0$}, $A-LC_2=P^{-1}(PA-\Gamma C_2) \geq 0$, $V_a-LV_b=P^{-1}(PV_a-\Gamma V_b) \geq 0$ and {$D_a-LD_b=P^{-1}(PD_a-\Gamma D_b) \geq 0$}, because they are matrix products of non-negative matrices. \mk{So}, $|L|=L, |A-LC_2|=A-LC_2,|V_a-LV_b|=V_a-LV_b$ and $|D_a-LD_b|=D_a-LD_b$, \mk{which turns \eqref{eq:error_dyn} into the form of \eqref{eq:comparison_error}}. 
  
  \textbf{Case} \eqref{case:D_2}. By applying the triangle inequality, the comparison system in \eqref{eq:error_dyn} can get upper bounded again as

\vspace{-.3cm}
  {\small
  \begin{align}\label{eq:error_dyn_2}
\begin{array}{rll}
\hspace{-.3cm}e^{x}_{k+1}&\hspace{-.2cm}\leq (|A|+|LC_2|+\overline{F}_{\rho}+|L|\overline{F}_{\psi_2})e^x_k+|\hat{W}|\delta^w\\
&\hspace{-.2cm}+(|V_a|\hspace{-.1cm}+\hspace{-.1cm}|LV_b|-|LC_2||\Lambda NV_2|\hspace{-.1cm}+\hspace{-.1cm}(I-|A|)|\Lambda NV_2|)\delta^v \\
&\hspace{-.2cm} +((I-|A|)|\Lambda|+|D_a|+|LD_b|-|LC_2||\Lambda|) \delta^\epsilon.
\end{array}
\end{align}
}\vspace{-0.2cm}

\noindent By a similar argument as in Case \eqref{case:D_1}, enforcing $-P$ to be \yo{Metzler} along with the constraints set $\mathbf{C}$ results in $|LC_2|=LC_2,|LV_b|=LV_b$ and $|LD_b|=LD_b$, \mo{and hence} turns \eqref{eq:error_dyn_2} into the form of \eqref{eq:comparison_error}.

  \textbf{Case} \eqref{case:D_3}. Note that by the triangle inequality,
$|V_a-LV_b|=|(A-LC_2)\Lambda NV_2+LV_2+\Lambda(I-NC_2)G_1SV_1|\leq |(A-LC_2)||\Lambda NV_2|+|L||V_2|+|\Lambda (I-NC_2)G_1SV_1|$, and
$|D_a-LD_b|=|(A-LC_2)\Lambda| \leq |(A-LC_2)||\Lambda|$. These two combined with \eqref{eq:error_dyn} yield
\begin{align}\label{eq:error_dyn_3}\small
\begin{array}{rl}
\hspace{-.2cm}e^{x}_{k+1}&\hspace{-.2cm}\leq (|A\hspace{-.1cm}-\hspace{-.1cm}LC_2|\hspace{-.1cm}+\hspace{-.1cm}\overline{F}_{\rho}\hspace{-.1cm}+\hspace{-.1cm}|L|\overline{F}_{\psi_2})e^x_k\hspace{-.1cm}+\hspace{-.1cm}|\hat{W}|\delta^w\hspace{-.1cm}+\hspace{-.1cm}|\Lambda| \delta^\epsilon\\
&\hspace{-.2cm}+(|L||V_2|\hspace{-.1cm}+\hspace{-.1cm} |\Lambda NV_2|\hspace{-.1cm}+\hspace{-.1cm}|\Lambda(I-NC_2)G_1SV_1|)\delta^v.
\end{array}\normalsize
\end{align}
The rest of the proof is to enforce that $A-LC_2$ and $L$ are non-negative to turn \eqref{eq:error_dyn_3} into the form of \eqref{eq:comparison_error}, which \mk{is similar to} the 
the proofs of the previous two cases. 
\end{proof}

\section{Illustrative Example} \label{sec:examples}
We now illustrate the effectiveness of our proposed \yo{resilient} observer using 
a three-area \yo{power} system \sy{\cite[Figure 1]{yong2016tcps}}, 
where each control area consists of a generator and load buses 
with transmission lines between areas. The nonlinear continuous-time model of the buses is slightly modified based on \cite{kim2016attack}, with \yo{the subscript} $i$ being the bus number: 
\begin{align*}\small
&\begin{array}{rl}
 \dot{f}_1(t) &\hspace{-.cm}= \hspace{-.cm}-\frac{1}{m_1}(\phi_i(t)- (P_{M_1}(t) + d_{ 1}(t))) +  w_{2, 1}(t),\\
 \dot{f}_i(t) &\hspace{-.cm}= \hspace{-.cm}-\frac{1}{m_i}(\phi_i(t)- P_{M_i}(t))+ \hspace{-.cm}w_{2, i}(t), \ i \in \{2,3\}, \\
    \dot{\theta}_i(t) &= f_i(t) + w_{1, i}(t), \quad \quad \quad \quad \quad \quad \ i \in \{1,2,3\},
 \end{array} \normalsize
\end{align*}
\yo{with $\phi_i(t) \triangleq D_i f_i(t) + \sum_{l\in S_i} P_{il}(t) +P_{L_i}(t))$,} where
$\theta_i$ is the phase angle, $f_i$ is the angular frequency, $m_i = 0.01$, $D_i = 0.11$, $P_{M_i}(t)$ is the mechanical power
(the control input), $P_{L_i}(t)$ is a known power demand, 
$S_i$ is the set of neighboring buses of $i$, and \yo{the nonlinear tie line power flow equation is as follows:} 
 $P_{il}(t) = -P_{li}(t) = t_{il}\sin(\theta_i(t) - \theta_l(t))$, 
with ${\color{black}t_{il}=1}$. \yo{Only the actuator of Control Area 1 is attacked and the false data injection signal is $d_1(t)$.}

\yo{On the other hand, the output equation is given as follows: 
\begin{align*}\small
y_i (t)&= [\theta_{i}(t) \ f_{i}(t)]^\top + v_i(t), \ i \in\{1,3\},\\
y_2(t) &= [\theta_2(t) \ f_2(t)]^\top + d_2 (t)+ v_2(t),\normalsize
\end{align*}
where only the sensor $y_2(t)$ is injected with a false data signal $d_2(t)$. Thus, the concatenated attack/unknown input signal is $d(t)=[d_1(t) \ d_2(t)]^\top$ and the $G$ and $H$ matrices in \eqref{eq:system}   corresponding to the attack locations are given by $G = \begin{bmatrix}
     0 & 0 & 0 & 0 & 0 & 0 \\
     0 & 1 & 0 & 0 & 0 & 0 
 \end{bmatrix}^\top$ and  $H = \begin{bmatrix}
     0 & 0 & 0 & 0 & 0 & 0 \\
     0 & 0 & 0 & 1 & 0 & 0 
 \end{bmatrix}^\top$.}


 In our simulations, \yo{the \emph{forward Euler} method is used to discretize the system dynamics with a sampling time $dt = 0.01 s$ and} 
 both $P_{M_i}(t)$ and $P_{L_i}(t)$ were set to be identically zero. Moreover, for $i=1,\dots,3$, the process noise $w_{i}(t)$ and the measurement noise $v_{i}(t)$ were assumed to be bounded within the bounds $\left[\begin{bmatrix} -50 & -50 \end{bmatrix}^\top , \begin{bmatrix} 50  & 50 \end{bmatrix}^\top\right]$ and $\left[\begin{bmatrix} -0.5 & -0.5 \end{bmatrix}^\top , \begin{bmatrix} 0.5 & 0.5 \end{bmatrix}^\top\right]$, respectively. 

For the sake of comparison, we first applied our previous input and state observer \cite{khajenejad2020simultaneous} \yo{that does not have stabilizing gains} to the above example, which \yo{we found to not be able to yield stable interval estimates (i.e., the framer interval width diverges).} 
On the other hand, \mo{\yo{when} implementing the proposed observer in \eqref{eq:framers}--\eqref{eq:d_framers}, the \yo{optimization problem} 
in \eqref{eq:sdp_DT} was solved with the additional linear constraints in Case \eqref{case:D_3}, \yo{and we obtained} 
the \mk{following} observer gain:}

\vspace{-.3cm}
{\small
\begin{align*}
    L = \begin{bmatrix}
         0.70  & 0 & 0.27 & 0 &  0 \\
        0 &   0 &   0.38 &  0 &   0 \\
       0 &    0.83  & 73.19 &  0 & 0 \\
       -0.0022 &    0.0084 & 174.55  &  0.0056 &  -0.0001 \\
       0  &  0 &  0.14 &    0.70 &    0.005 \\
        0.0050  &  0.0098  &  0.11 &   0.01 &    0.62
    \end{bmatrix}.
\end{align*}
}\vspace{-0.1cm}

 As shown in Figures \ref{fig:states} and \ref{fig:attacks}{\color{black},
all the states and attack signals are bounded by the framers \yo{computed} 
by the proposed observer, demonstrating its \mo{correctness}} \yo{and ability to obtain resilient state estimates and to reconstruct attack signals}. Finally, \mo{as} shown in Figure \ref{fig:error}, the actual state and input estimation error sequences \yo{(i.e., the framer interval widths)} converge to steady-state values, \yo{demonstrating} the input-to-state stability of the \yo{proposed interval} observer. 

\begin{figure}[t]
\centering

\includegraphics[width=0.225\textwidth,trim=0mm 0mm 0mm 0mm,clip]{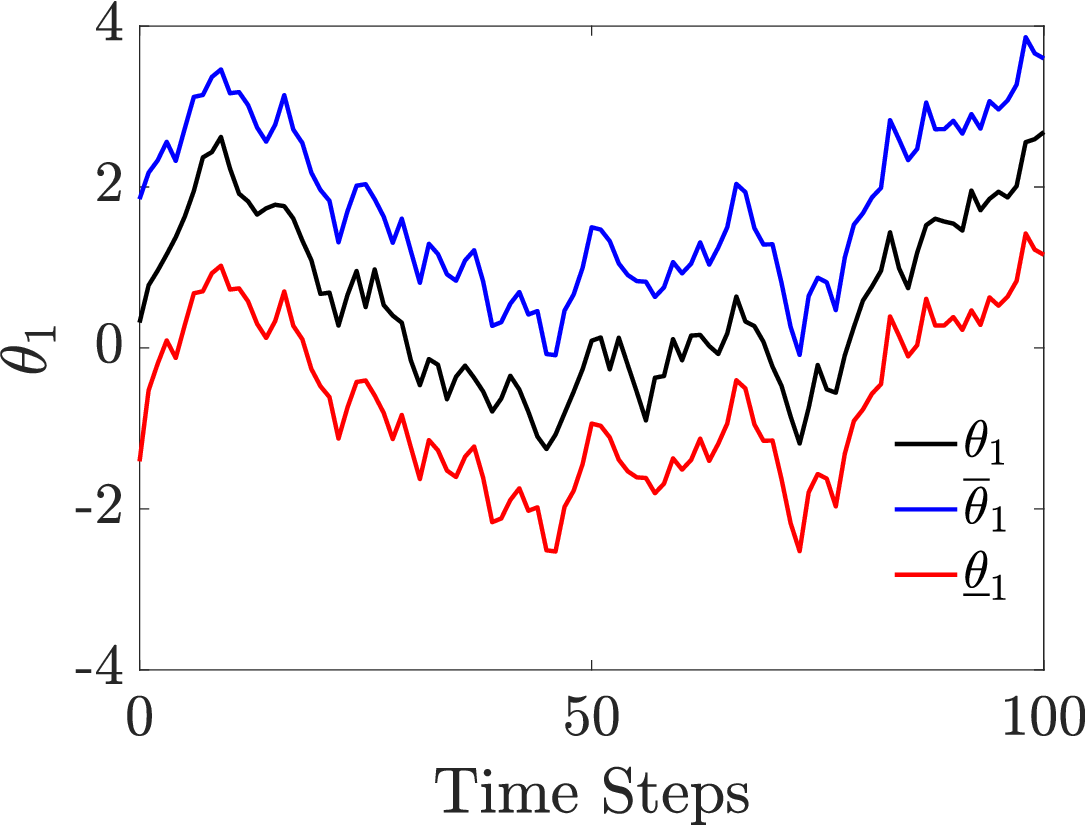} 
\includegraphics[width=0.225\textwidth,trim=0mm 0mm 0mm 0mm,clip]{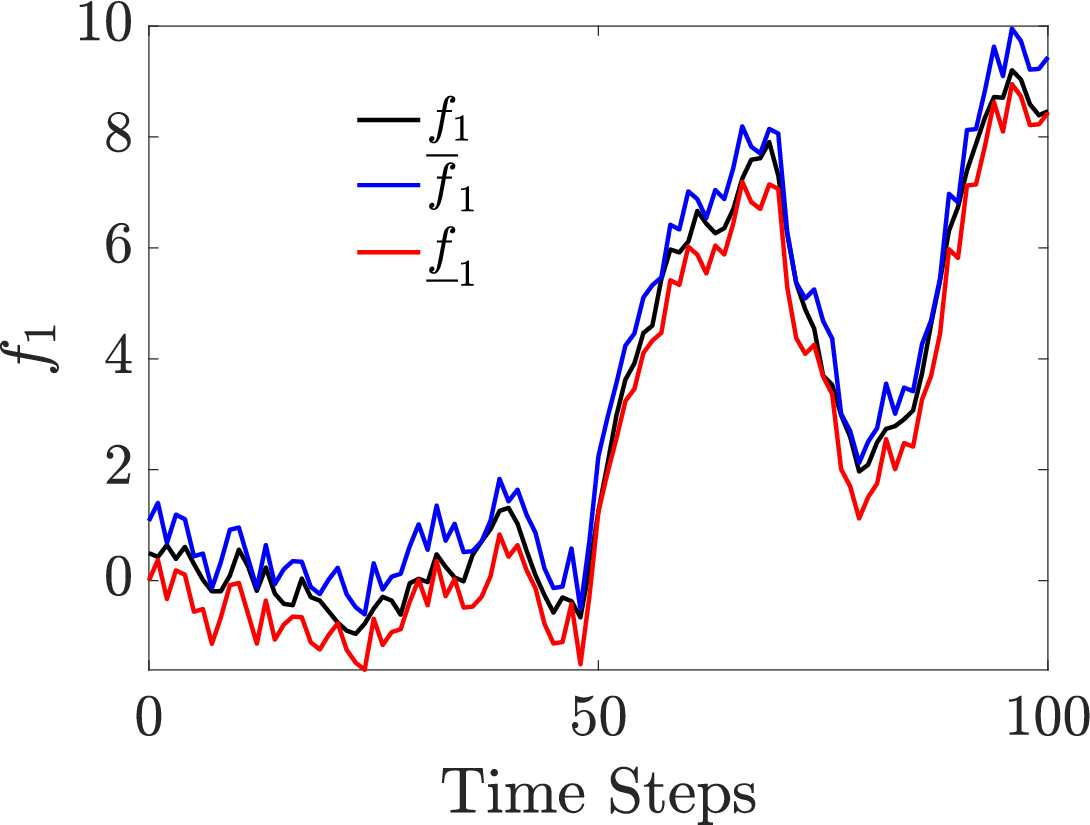} 

\includegraphics[width=0.225\textwidth,trim=0mm 0mm 0mm 0mm,clip]{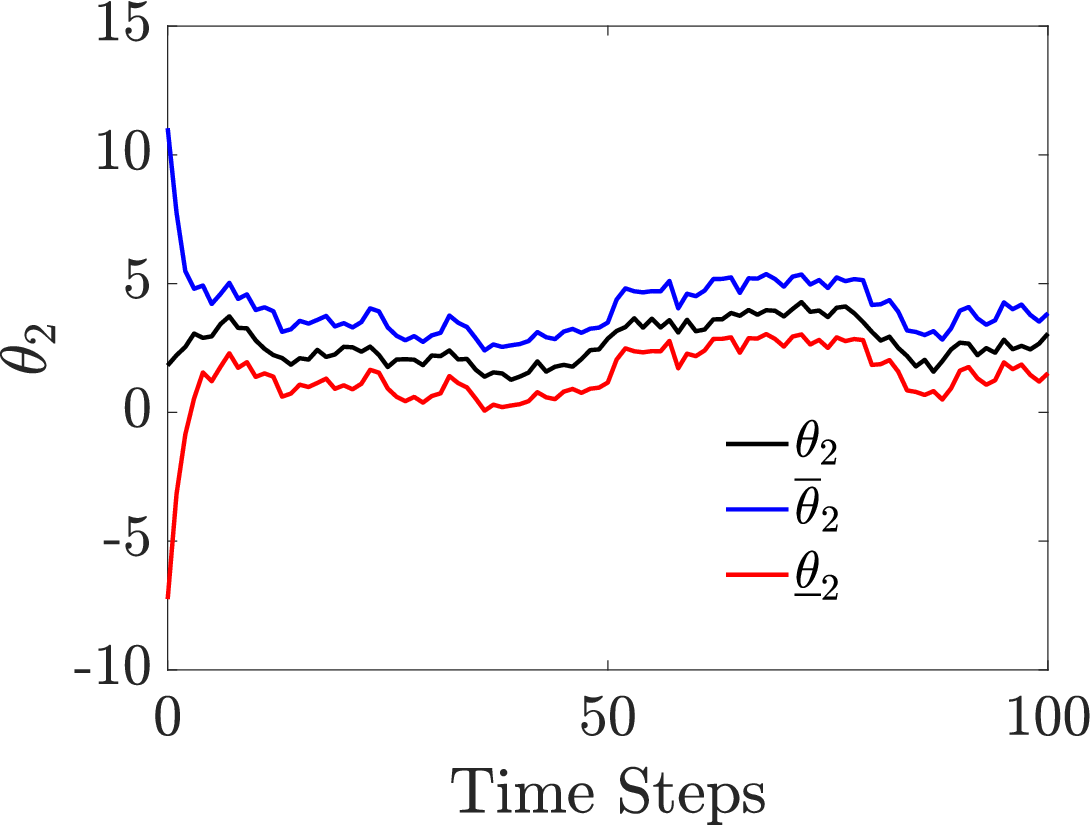} 
\includegraphics[width=0.225\textwidth,trim=0mm 0mm 0mm 0mm,clip]{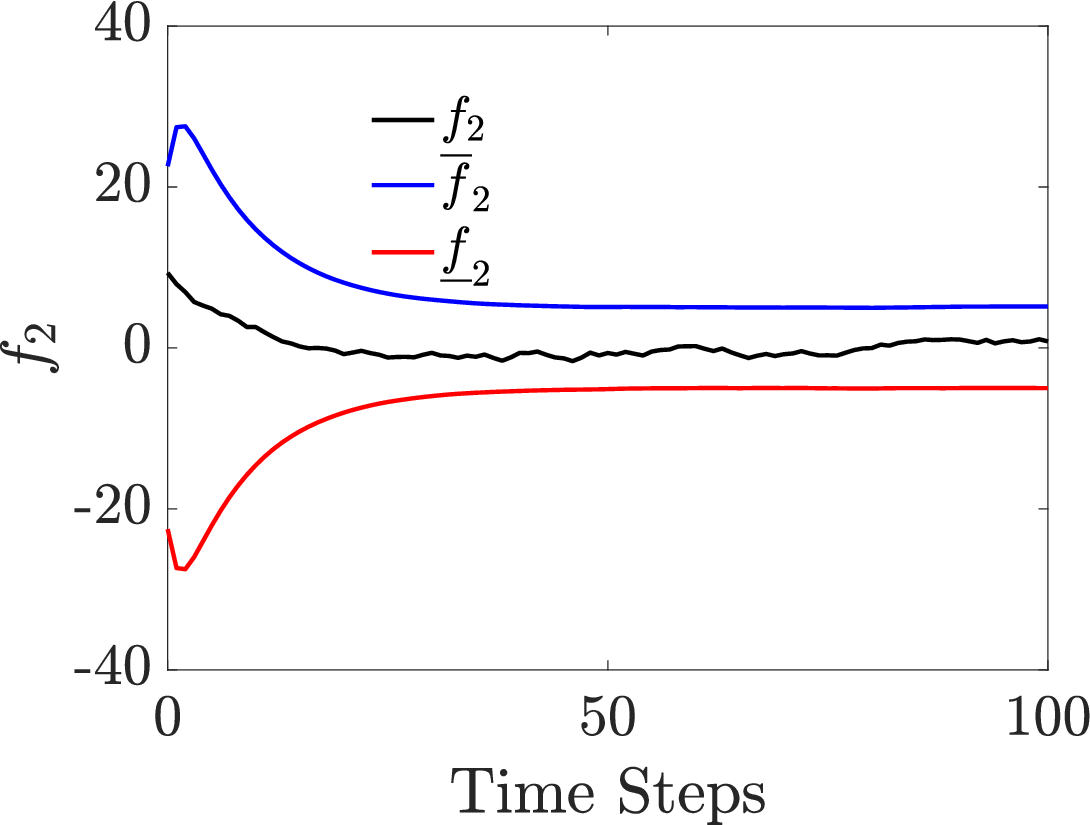} 

\includegraphics[width=0.225\textwidth,trim=0mm 0mm 0mm 0mm,clip]{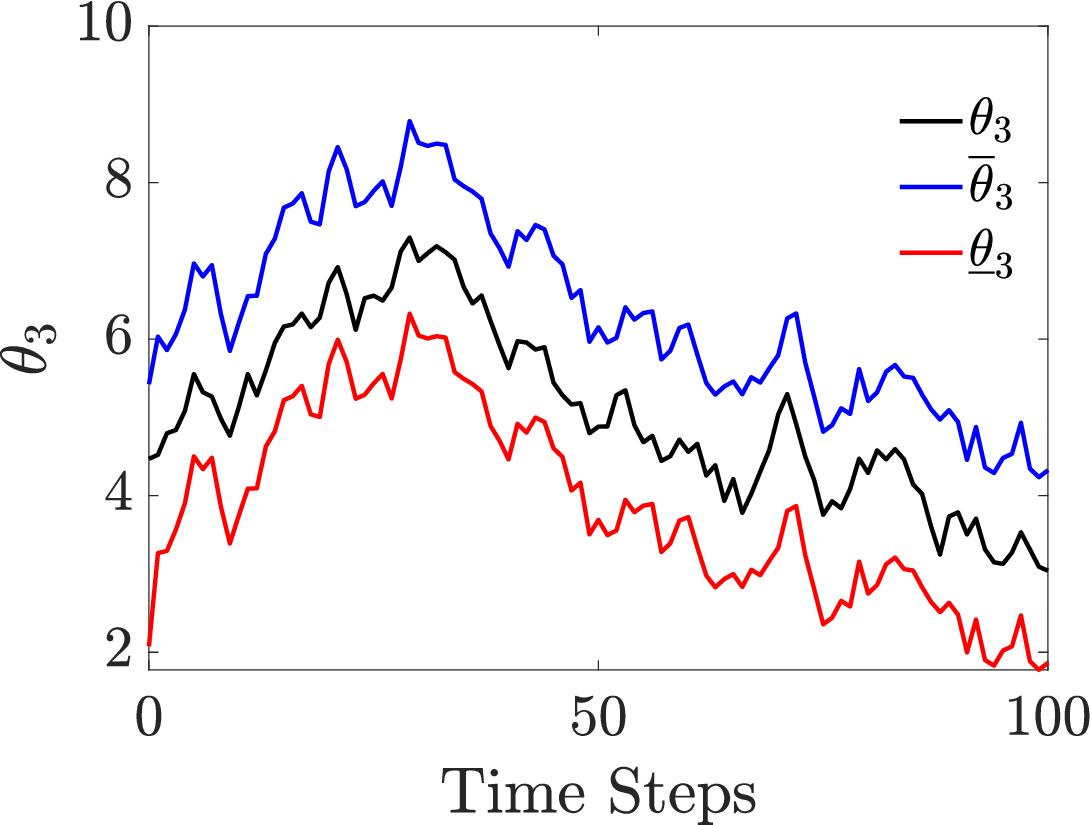} 
\includegraphics[width=0.225\textwidth,trim=0mm 0mm 0mm 0mm,clip]{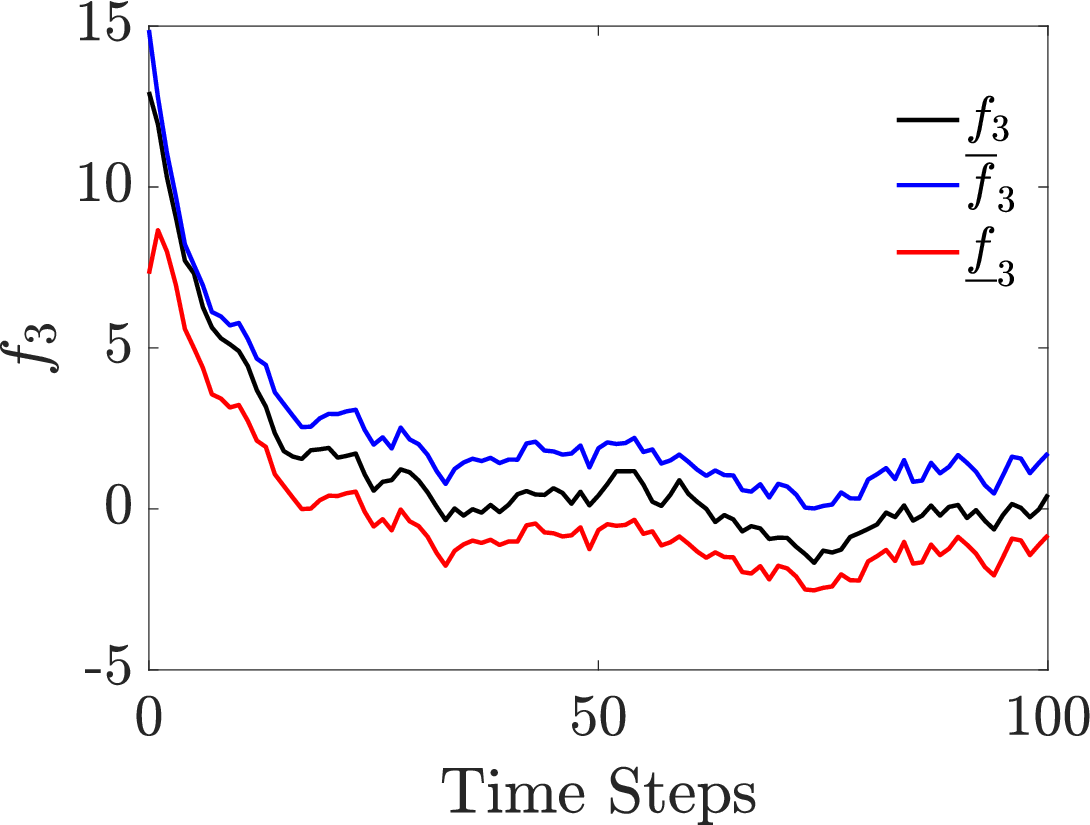} 
\caption{{States: $\theta_i,f_i$, and their upper and lower framers $\overline{\theta}_i,\underline{\theta}_1,\overline{f}_i,\underline{f}_i$, returned by the proposed approach.}}
\label{fig:states}
\vspace{-0.05cm}
\end{figure}

\begin{figure}[t]
\centering
\includegraphics[width=0.235\textwidth,trim=2mm 0mm 5mm 0mm,clip]{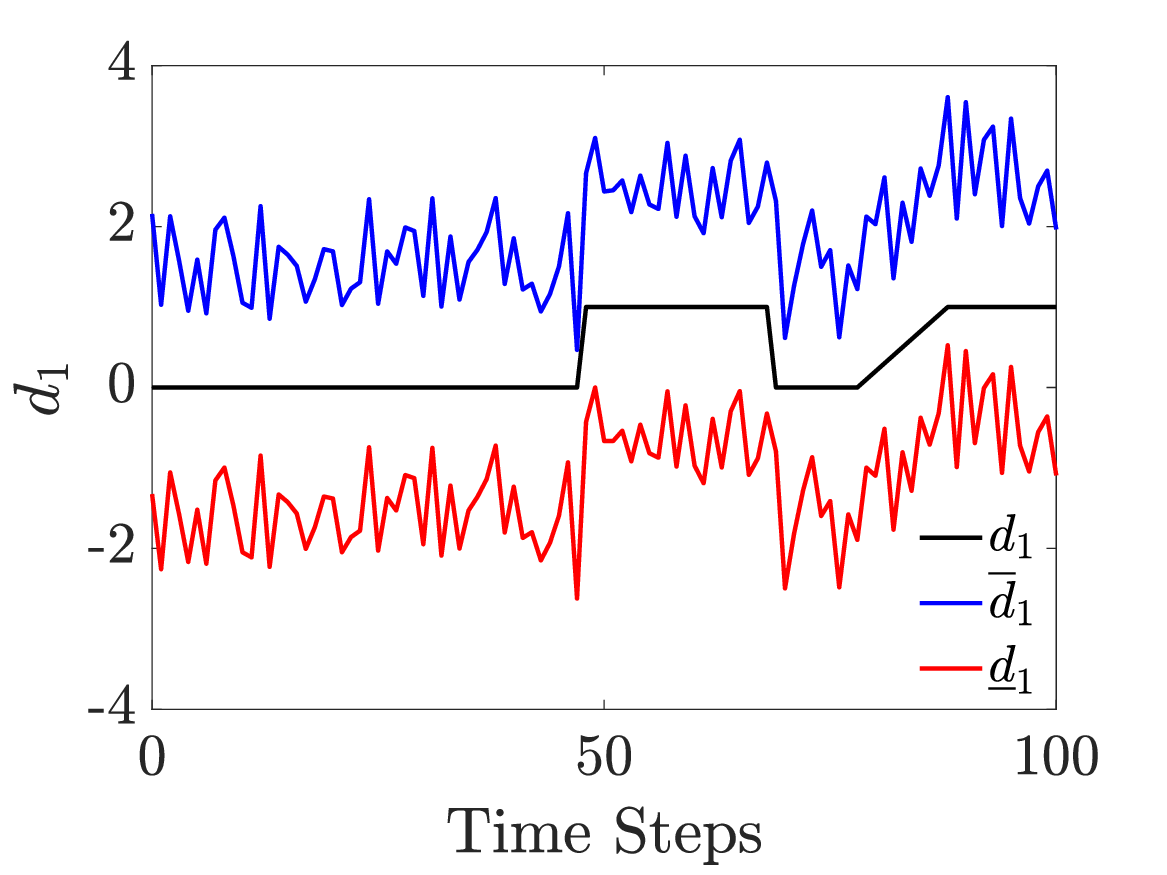} 
\includegraphics[width=0.235\textwidth,trim=0mm 0mm 0mm 0mm,clip]{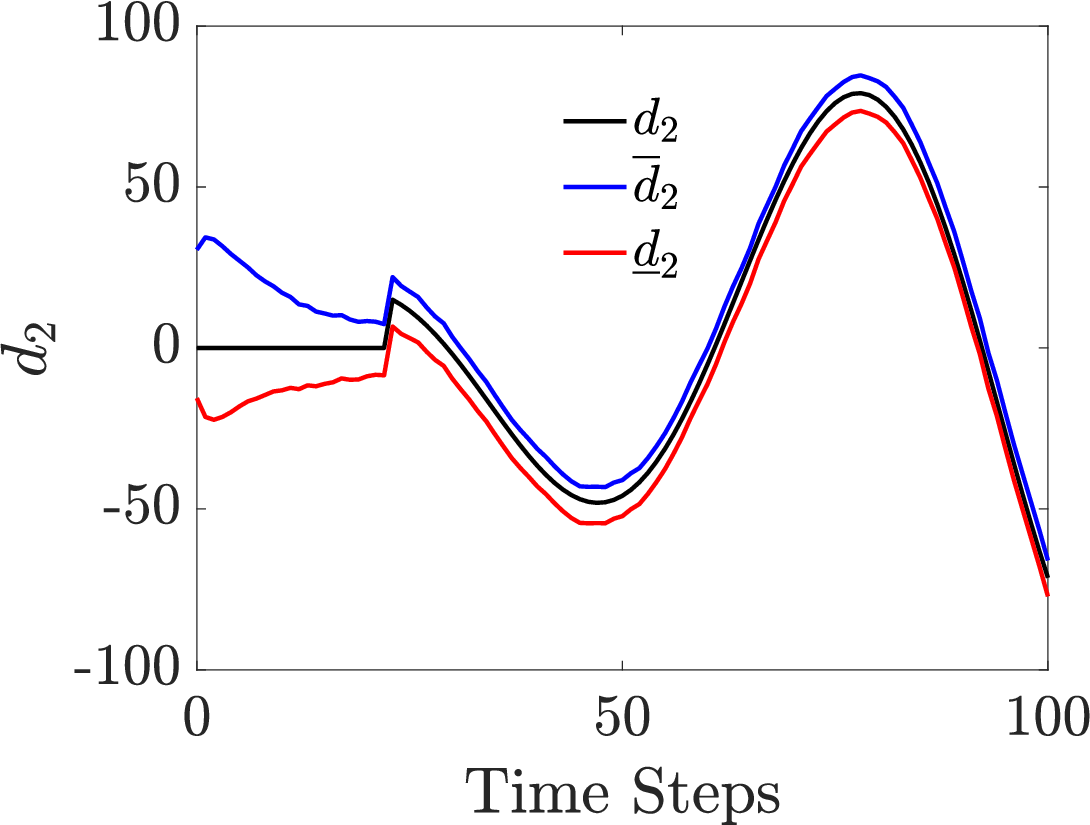} 
\caption{{Attack signals: $d_1,d_2$, and their upper and lower framers $\overline{d}_1,\underline{d}_1,\overline{d}_2,\underline{d}_2$, returned by the proposed approach.}}
\label{fig:attacks}\vspace{-0.05cm}
\end{figure}

\begin{figure}[t]
\centering
\includegraphics[width=0.239\textwidth,trim=0mm 0mm 8mm 0mm,clip]{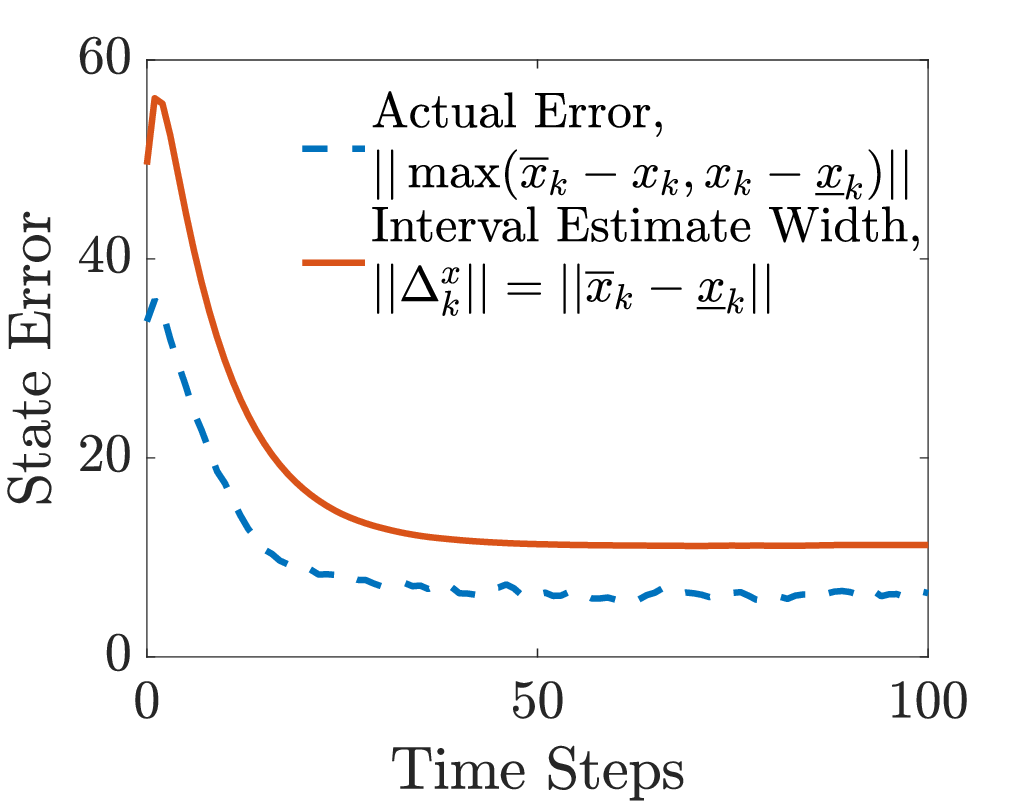} 
\includegraphics[width=0.239\textwidth,trim=0mm 0mm 8mm 0mm,clip]{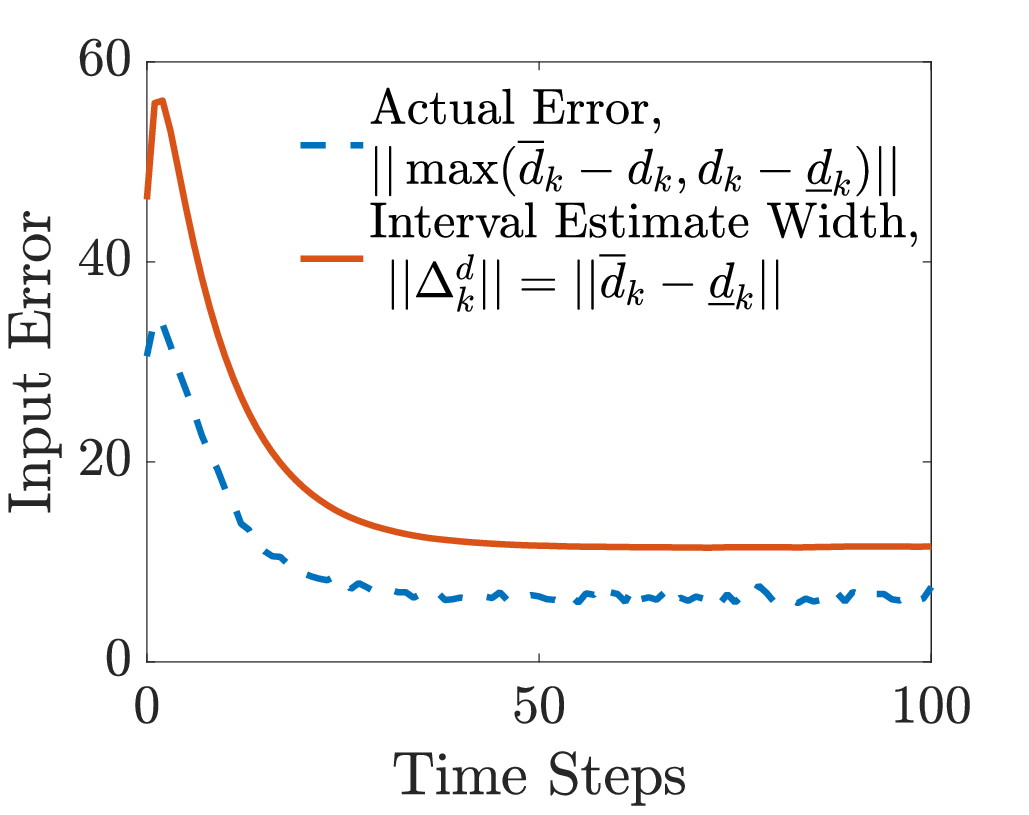} 
\caption{State and input estimation error sequences.
}
\label{fig:error} \vspace{-0.05cm}
\end{figure}

\vspace{-0.1cm}
\section{Conclusion} \label{sec:conclusion}
\vspace{-0.05cm}
In this paper, the problem of \mo{resilient state estimation and attack reconstruction for nonlinear discrete-time systems with nonlinear observations/constraints, that are subject to 
bounded noise signals, was addressed. In the considered setting, both sensors and actuators \sy{could be} affected by attack signals/unknown inputs.}
By introducing auxiliary states, as well as taking advantage of mixed-monotone decomposition of nonlinear \mo{functions} and affine \mo{parallel} outer-approximation of the observation functions, the proposed observer was shown to be correct, i.e.,  it  
recursively computes interval estimates that by construction, contain the true states and unknown inputs of the system. Further, several semi-definite programs 
were provided to synthesize the proposed observer gains \mo{that guarantee} input-to-state stability \mo{of the observer} and \mo{optimality} of the \yo{proposed interval observer} design. \yo{Future work will include alternative designs for minimizing $L_1$ gain, similar to \cite{pati2022L1}, as well as an extension to continuous-time nonlinear systems and hybrid systems.} 

\bibliographystyle{ieeetran}
{\tiny
\bibliography{biblio}

\begin{thebibliography}{10}
\providecommand{\url}[1]{#1}
\csname url@rmstyle\endcsname
\providecommand{\newblock}{\relax}
\providecommand{\bibinfo}[2]{#2}
\providecommand\BIBentrySTDinterwordspacing{\spaceskip=0pt\relax}
\providecommand\BIBentryALTinterwordstretchfactor{4}
\providecommand\BIBentryALTinterwordspacing{\spaceskip=\fontdimen2\font plus
\BIBentryALTinterwordstretchfactor\fontdimen3\font minus
  \fontdimen4\font\relax}
\providecommand\BIBforeignlanguage[2]{{%
\expandafter\ifx\csname l@#1\endcsname\relax
\typeout{** WARNING: IEEEtran.bst: No hyphenation pattern has been}%
\typeout{** loaded for the language `#1'. Using the pattern for}%
\typeout{** the default language instead.}%
\else
\language=\csname l@#1\endcsname
\fi
#2}}

\bibitem{liu2011robust}
W.~Liu and I.~Hwang, ``Robust estimation and fault detection and isolation
  algorithms for stochastic linear hybrid systems with unknown fault input,''
  \emph{IET Control Theory \& Applications}, vol.~5, no.~12, pp. 1353--1368,
  2011.

\bibitem{yong2016tcps}
S.~Z. Yong, M.~Zhu, and E.~Frazzoli, ``Switching and data injection attacks on
  stochastic cyber-physical systems: {M}odeling, resilient estimation and
  attack mitigation,'' \emph{ACM Transactions on Cyber-Physical Systems},
  vol.~2, no.~2, p.~9, 2018.

\bibitem{yong2018simultaneous}
S.~Z. Yong, ``Simultaneous input and state set-valued observers with
  applications to attack-resilient estimation,'' in \emph{American Control
  Conference (ACC)}.\hskip 1em plus 0.5em minus 0.4em\relax IEEE, 2018, pp.
  5167--5174.

\bibitem{chen2021resilient}
G.~Chen, Y.~Zhang, S.~Gu, and W.~Hu, ``Resilient state estimation and control
  of cyber-physical systems against false data injection attacks on both
  actuator and sensors,'' \emph{IEEE Transactions on Control of Network
  Systems}, vol.~9, no.~1, pp. 500--510, 2021.

\bibitem{wu2018secure}
C.~Wu, Z.~Hu, J.~Liu, and L.~Wu, ``Secure estimation for cyber-physical systems
  via sliding mode,'' \emph{IEEE Transactions on Cybernetics}, vol.~48, no.~12,
  pp. 3420--3431, 2018.

\bibitem{mousavinejad2018novel}
E.~Mousavinejad, F.~Yang, Q.~Han, and L.~Vlacic, ``A novel cyber attack
  detection method in networked control systems,'' \emph{IEEE Transactions on
  Cybernetics}, vol.~48, no.~11, pp. 3254--3264, 2018.

\bibitem{corradini2017robust}
M.~Corradini and A.~Cristofaro, ``Robust detection and reconstruction of state
  and sensor attacks for cyber-physical systems using sliding modes,''
  \emph{IET Control Theory \& Applications}, vol.~11, no.~11, pp. 1756--1766,
  2017.

\bibitem{lu2017secure}
A.~Lu and G.~Yang, ``Secure state estimation for cyber-physical systems under
  sparse sensor attacks via a switched {L}uenberger observer,''
  \emph{Information sciences}, vol. 417, pp. 454--464, 2017.

\bibitem{ellero2019unknown}
N.~Ellero, D.~Gucik-Derigny, and D.~Henry, ``An unknown input interval observer
  for {LPV} systems under ${L}_2$-gain and ${L}_{\infty}$-gain criteria,''
  \emph{Automatica}, vol. 103, pp. 294--301, 2019.

\bibitem{wang2021novel}
X.~Wang, C.~Tan, L.~Liu, and Q.~Qi, ``A novel unknown input interval observer
  for systems not satisfying relative degree condition,'' \emph{International
  Journal of Robust and Nonlinear Control}, vol.~31, no.~7, pp. 2762--2782,
  2021.

\bibitem{marouani2021unknown}
G.~Marouani, T.~Dinh, T.~Raissi, X.~Wang, and H.~Messaoud, ``Unknown input
  interval observers for discrete-time linear switched systems,''
  \emph{European Journal of Control}, vol.~59, pp. 165--174, 2021.

\bibitem{wei2021hybrid}
L.~Wei and W.~Yang, ``Hybrid observer design for nonlinear systems with unknown
  inputs,'' in \emph{2021 China Automation Congress (CAC)}.\hskip 1em plus
  0.5em minus 0.4em\relax IEEE, 2021, pp. 4094--4099.

\bibitem{khajenejad2020simultaneousfullrank}
M.~Khajenejad and S.~Z. Yong, ``Simultaneous input and state interval observers
  for nonlinear systems with full-rank direct feedthrough,'' in \emph{2020 59th
  IEEE Conference on Decision and Control (CDC)}.\hskip 1em plus 0.5em minus
  0.4em\relax IEEE, 2020, pp. 5443--5448.

\bibitem{khajenejad2020simultaneous}
------, ``Simultaneous input and state interval observers for nonlinear systems
  with rank-deficient direct feedthrough,'' in \emph{2021 European Control
  Conference (ECC)}.\hskip 1em plus 0.5em minus 0.4em\relax IEEE, 2021, pp.
  2311--2316.

\bibitem{moh2022intervalACC}
M.~Khajenejad, F.~Shoaib, and S.~Z. Yong, ``Interval observer synthesis for
  locally {L}ipschitz nonlinear dynamical systems via mixed-monotone
  decompositions,'' in \emph{2022 American Control Conference (ACC)}.\hskip 1em
  plus 0.5em minus 0.4em\relax IEEE, 2022, pp. 2970--2975.

\bibitem{khajenejad2022h}
M.~Khajenejad and S.~Z. Yong, ``{$\mathcal{H}_\infty$}-optimal interval
  observer synthesis for uncertain nonlinear dynamical systems via
  mixed-monotone decompositions,'' \emph{IEEE Control Systems Letters}, vol.~6,
  pp. 3008--3013, 2022.

\bibitem{singh2018mesh}
K.~Singh, Q.~Shen, and S.~Z. Yong, ``Mesh-based affine abstraction of nonlinear
  systems with tighter bounds,'' in \emph{Conference on Decision and Control
  (CDC)}.\hskip 1em plus 0.5em minus 0.4em\relax IEEE, 2018, pp. 3056--3061.

\bibitem{abate2020tight}
M.~Abate, M.~Dutreix, and S.~Coogan, ``Tight decomposition functions for
  continuous-time mixed-monotone systems with disturbances,'' \emph{IEEE
  Control Systems Letters}, vol.~5, no.~1, pp. 139--144, 2020.

\bibitem{yang2019sufficient}
L.~Yang, O.~Mickelin, and N.~Ozay, ``On sufficient conditions for mixed
  monotonicity,'' \emph{IEEE Transactions on Automatic Control}, vol.~64,
  no.~12, pp. 5080--5085, 2019.

\bibitem{khajenejad2021tight}
M.~Khajenejad and S.~Z. Yong, ``Tight remainder-form decomposition functions
  with applications to constrained reachability and guaranteed state
  estimation,'' \emph{IEEE Transactions of Automatic Control}, pp. 1--16, 2023,
  early access.

\bibitem{yang2019tight}
L.~Yang and N.~Ozay, ``Tight decomposition functions for mixed monotonicity,''
  in \emph{Conference on Decision and Control (CDC)}, 2019, pp. 5318--5322.

\bibitem{efimov2013interval}
D.~Efimov, T.~Ra{\"\i}ssi, S.~Chebotarev, and A.~Zolghadri, ``Interval state
  observer for nonlinear time varying systems,'' \emph{Automatica}, vol.~49,
  no.~1, pp. 200--205, 2013.

\bibitem{duan2013lmis}
G.~Duan and H.~Yu, \emph{LMIs in control systems: analysis, design and
  applications}.\hskip 1em plus 0.5em minus 0.4em\relax CRC press, 2013.

\bibitem{sontag1996new}
E.~Sontag and Y.~Wang, ``New characterizations of input-to-state stability,''
  \emph{IEEE Trans. on Automatic Control}, vol.~41, no.~9, pp. 1283--1294,
  1996.

\bibitem{plemmons1977m}
R.~Plemmons, ``{M}-matrix characterizations. {I:} nonsingular {M}-matrices,''
  \emph{Linear Algebra and its Applications}, vol.~18, no.~2, pp. 175--188,
  1977.

\bibitem{kim2016attack}
H.~Kim, P.~Guo, M.~Zhu, and P.~Liu, ``Attack-resilient estimation of switched
  nonlinear cyber-physical systems,'' in \emph{2017 American Control Conference
  (ACC)}.\hskip 1em plus 0.5em minus 0.4em\relax IEEE, 2017, pp. 4328--4333.

\bibitem{pati2022L1}
T.~Pati, M.~Khajenejad, S.~P. Daddala, and S.~Z. Yong, ``{$L_1$}-robust
  interval observer design for uncertain nonlinear dynamical systems,''
  \emph{IEEE Control Systems Letters}, vol.~6, pp. 3475--3480, 2022.

\end{thebibliography}
}

\appendices

\end{document}